\let\csname equation*\endcsname\relax
\let\csname endequation*\endcsname\relax
\newcounter{protocol}
\newenvironment{protocol}[1]
  {\par\addvspace{\topsep}
   \noindent
   \tabularx{\linewidth}{@{} X @{}}
    \hline
    \refstepcounter{protocol}\textbf{Protocol \theprotocol} #1 \\
    \hline}
  {\\
  \hline
   \endtabularx
   \par\addvspace{\topsep}}
\newtheorem{theorem}{Theorem}
\pgfplotsset{compat=1.14}
\newcommand{\R}{\mathcal{R}}
\newcommand{\mS}{\mathcal{S}}
\newcommand{\MEV}{\mathcal{MEV}}
\newcommand{\MEVfiltered}{\mathcal{MEV}_{C}\bot}
\newcommand{\ConcreteProt}{\pi_{[n]}\mathcal{R}\pi_S}
\newcommand{\Xfamily}{X=\{x_i\}_{i=1}^n}
\newcommand{\pifamily}{\pi=\{\pi_i\}_{i=1}^n}
\begin{document}
\title{Composable Security for Multipartite Entanglement Verification}
\date{January 2020}

\author{R Yehia$^1$, E Diamanti$^1$ and 
I Kerenidis$^2$}

\address{$^1$Sorbonne Université, CNRS, LIP6, F-75005 Paris, France}
\address{$^2$Université de Paris, CNRS, IRIF, F-75013 Paris, France}

\ead{raja.yehia@lip6.fr}

\begin{abstract}
We present a composably secure protocol allowing $n$ parties to test an entanglement generation resource controlled by a possibly dishonest party. The test consists only in local quantum operations and authenticated classical communication once a state is shared among them and provides composable security, namely it can be used as a secure subroutine by $n$ honest parties within larger communication protocols to test if a source is sharing quantum states that are at least $\epsilon$-close to the GHZ state. This claim comes on top of previous results on multipartite entanglement verification where the security was studied in the usual game-based model. Here, we improve the protocol to make it more suitable for practical use in a quantum network and we study its security in the Abstract Cryptography framework to highlight composability issues and avoid hidden assumptions. This framework is a top-to-bottom theory that makes explicit any piece of information that each component (party or resource) gets at every time-step of the protocol. Moreover any security proof, which amounts to showing indistinguishability between an ideal resource having the desired security properties (up to local simulation) and the concrete resource representing the protocol, is composable for free in this setting. This allows us to readily compose our basic protocol in order to create a composably secure multi-round protocol enabling honest parties to obtain a state close to a GHZ state or an abort signal, even in the presence of a noisy or malicious source. Our protocol can typically be used as a subroutine in a Quantum Internet, to securely share a GHZ state among the network before performing a communication or computation protocol.
\end{abstract}

\noindent{\it Keywords\/}: Composable security, Abstract Cryptography, Simulation-based cryptography, Entanglement verification, Quantum Internet.

\maketitle

\section{Introduction}


\subsection{Motivation}
Recent conceptual developments in the quantum Internet have allowed to start defining layer models for quantum network architectures \cite{QIavision,IETFlinklayer}. Similar to the OSI model for the classical Internet, these models separate physical and application issues to allow researchers to study experimental problems such as extending coherence time or establishing entangled links between distant nodes using various physical platforms~\cite{loopholefreebell,repeaterNV} and conversely to start developing high-level applications that could sit on top of any physical implementation. A non-exhaustive overview of protocols for the future quantum Internet can be found on the Quantum Protocol Zoo website~\cite{Zoo}. These layer models shed light on composable security issues that have to be addressed. Roughly speaking, a protocol is said to be composably secure if it can be used multiple times in a row or as a subroutine in any bigger protocol without threatening the overall security. A protocol can thus be seen as a black box that can be composed with other protocols, which is precisely the way we would like to think of applications in such settings.

Expected progress within the next few years will lead to several realistic applications such as quantum money~\cite{QuantumMoney,weakcoinflippin}, voting~\cite{Leaderelection}, or anonymous transmission~\cite{Anonymity} that rely on firstly securely establishing entanglement between all nodes in a quantum network. More precisely, many applications can be achieved in such a network of $n$ nodes by first sharing an entangled state then manipulating it locally to get the desired entanglement needed for the rest of the protocol. One such multiparty state is the GHZ state $\frac{\ket{0}^{\bigotimes n}+\ket{1}^{\bigotimes n}}{\sqrt{2}}$ \cite{GHZ} where each party holds one qubit. In this context, it becomes relevant to have a secure protocol for ensuring that the source shares a state that is at least $\epsilon$-close to the GHZ state. In order to be practical, this protocol should have the minimum requirements for the parties, which are to be able to perform classical communication and local single qubit operations. It should also be composable as it is meant to be used as a subroutine in bigger quantum network protocols.

To show composability results, one has to prove security in a composable framework. Here we use the Abstract Cryptography (AC) framework~\cite{AbstractCryptography,ConsCryptography} that captures the ideal-world vs. real-world paradigm. We continue this introduction by presenting the abstract cryptography framework and how composable security can be proven. We try to provide a full introduction to the framework so that it is accessible to non experts in the topic. Then in the second section we present the ideal abstraction of a verification protocol with the desired security properties as well as an actual multipartite entanglement verification protocol that achieves this functionality. This protocol, designed as a subroutine for bigger protocols in a distributed setting, presents interesting security properties and is believed to be achievable in the near future. We study its composability properties in the AC framework. Then in the last section we discuss possible implementations in near-tear quantum networks and limitations of such constructions. 

\subsection{Composability and Abstract Cryptography}

In order to prove composable security, one needs to prove security in a composable framework. One such framework is Abstract (or Constructive) Cryptography (AC), a top-down approach developed by U. Maurer and R. Renner~\cite{AbstractCryptography,FromToConsCrypto,ConsCryptography} to define a simulation-based cryptography theory. It creates some notion of a module with well-defined interfaces that interacts with the rest of the world in a black box fashion. In the Universal Composability framework of Canetti~\cite{UC}, this is called a functionality. In AC, those modules are called resources and going from a resource to another is done through converters called protocols. For example a one-time pad protocol constructs a secure communication channel resource out of a secret key resource and an authenticated classical channel (see Fig.~\ref{fig:OTP}). In their first paper~\cite{AbstractCryptography}, Renner and Maurer defined a complete cryptography algebra of resources with their composition rules. This allowed them to define \textit{equivalence relations} between resources and to infer security notions that inherit composability properties. Moreover this framework is of interest when modeling multiparty protocols as it offers a simpler view of what dishonest parties could have access to than the usual game-based cryptography theory where the strategy for a dishonest group should be given explicitly. The level of abstraction of the different resources can be modulated to highlight the properties that one wants to study about them. Finally, the AC framework is a resource theory with a large power of abstraction that allows us to think of a protocol the same way we would do when thinking of an application in the quantum Internet.

Different results have been achieved using this framework such as the study of unfair coin tossing~\cite{UnfairCoinTossing}, remote state preparation~\cite{RemoteState}, oblivious transfer~\cite{ObliTransfer} and composable security of multiparty delegated quantum computing~\cite{MultipartyQC,DelegatedQC,ComposableMultiDQC}. Different extensions have also been proposed such as adding relativistic constraints~\cite{Relativistic} or global event history in the case of ratcheting~\cite{Ratcheting}. Let us give a brief overview of this framework which we will use to study our multipartite entanglement verification protocol.\\

Abstract cryptography uses the concept of abstract systems to express cryptography as a resource theory. A cryptography protocol is viewed as the construction of some \textit{ideal} resource $\mathcal{S}$ out of other \textit{real} resources $\mathcal{R}$. This construction notion is made through converters. Finally, the distance between two resources is formalised through the notion of a distinguisher. Those three objects are the building blocks of the AC theory. \newpage

\begin{figure}[!ht]
\flushleft
\includegraphics[width=15cm]{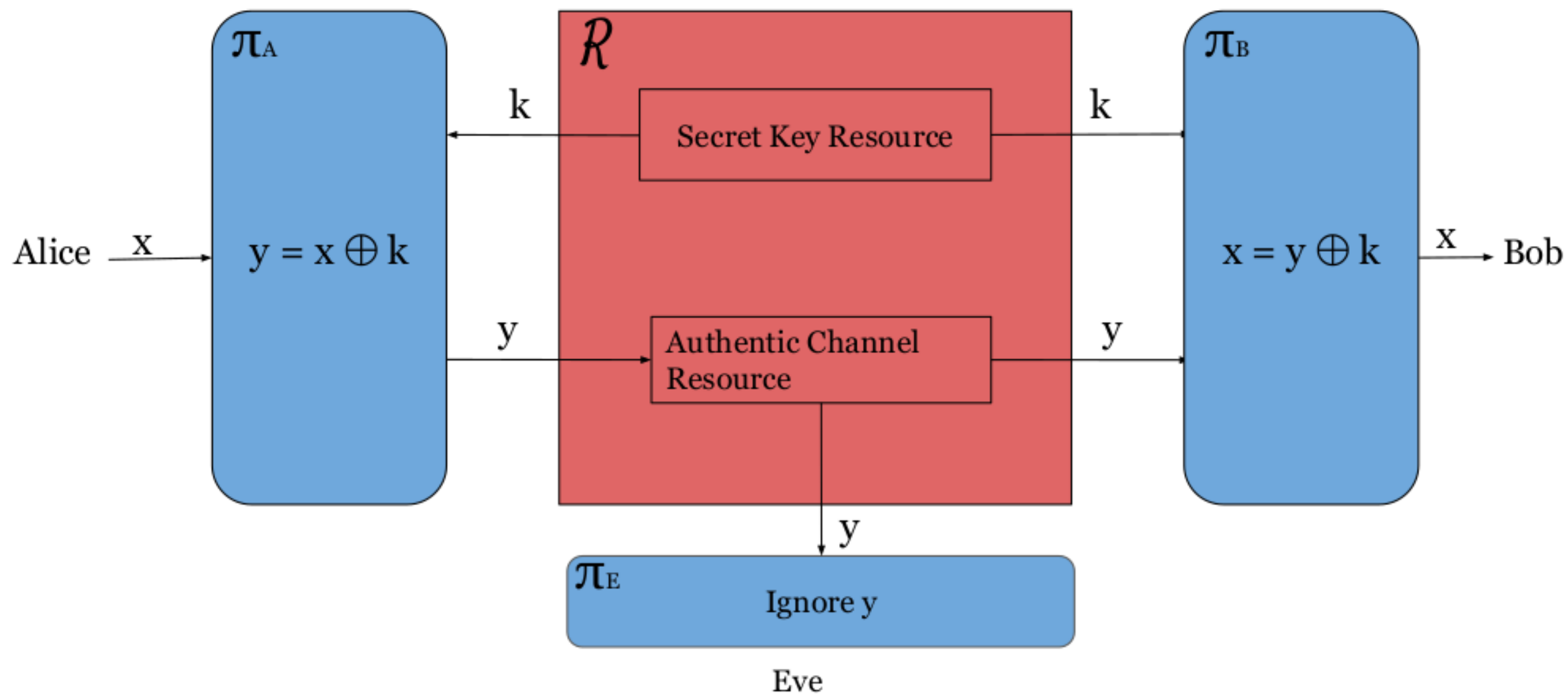}
\caption{Concrete One Time Pad resource $\pi_A\pi_B\R\pi_E$: Alice has access to the left interface, Bob to the right interface and Eve to the down interface. $\R$ is the resource composed of a secret key resource and an authentic channel resource in parallel. Protocols are represented in blue, $\pi_E$ being the protocol of an honest Eve that blocks the input $y$ from the authenticated channel resource.}
\label{fig:OTP}
\end{figure}

A \textbf{resource} is an abstract system with interfaces specified by a set $\mathcal{I}$ (e.g. $\mathcal{I}= \{A,B,E\}$ for Alice, Bob and Eve in a tripartite setting). Each interface is accessible to one user and provides them with some abilities. Note that the notion of a party is not explicitly modeled in this framework, but induced by the interfaces they are restricted to have access to. Resources are used to model functionalities that are not done specifically by a party. They can be associated with real physical resources (e.g. a quantum channel) or with abstract functionalities (e.g. bit commitment or quantum  random number generation). The level of abstraction of such a functionality is not bounded \textit{per se} but it is usually tailored to the application that one is modeling and the properties one wants to highlight. For example quantum memories can be explicit and represented with resources or abstracted in converters. Classical protocols can also be explicitly shown or abstracted through oracle calls. Moreover any parallel composition of resources is a resource in which the interface set corresponds to the union of the ones from the composed resources.

\textbf{Converters} are also abstract systems with one set of ``inside'' interfaces that are expected to be connected to a resource and one set of ``outside'' interfaces. Their name derives from the fact that a converter attached to a resource converts it into another resource by emulating a certain set of interfaces to the outside world. They typically model the local computation of a party during a protocol and are denoted with Greek letters. For a resource $\R$ with interfaces $A$ and $B$ and a two-party protocol $\pi=\{\pi_A,\pi_B\}$ we denote $\pi_A\R\pi_B$ the resource obtained from connecting $\pi_A$ to interface $A$ and $\pi_B$ to interface $B$ (see Fig.~\ref{fig:OTP}). A dishonest party is then modeled by just unplugging their corresponding converter from the resources, indicating that the party is not following the protocol. This leaves the interface they have been accessing open to the outside world.
Note that the ordering of the converters is not important and that they are usually written in the most readable way.

Converters are also used to model the honest utilisation of an ideal resource. Indeed a dishonest user might have access to more functionalities than an honest one. To model this we use a converter, a \textit{filter}, to cover these functionalities for an honest player, that we remove in case of a dishonest utilisation of the resource (see an example in Fig.~\ref{fig:PrivClassChan}). Finally, converters are used in the ideal world to simulate the local output to a dishonest party, in which case we use the term of \textit{simulator}. Converters and resources can be described with the help of boxes and arrows as well as in the form of algorithms by specifying where each output goes.

\begin{figure}[!ht]
    \centering
    \includegraphics[width=9cm]{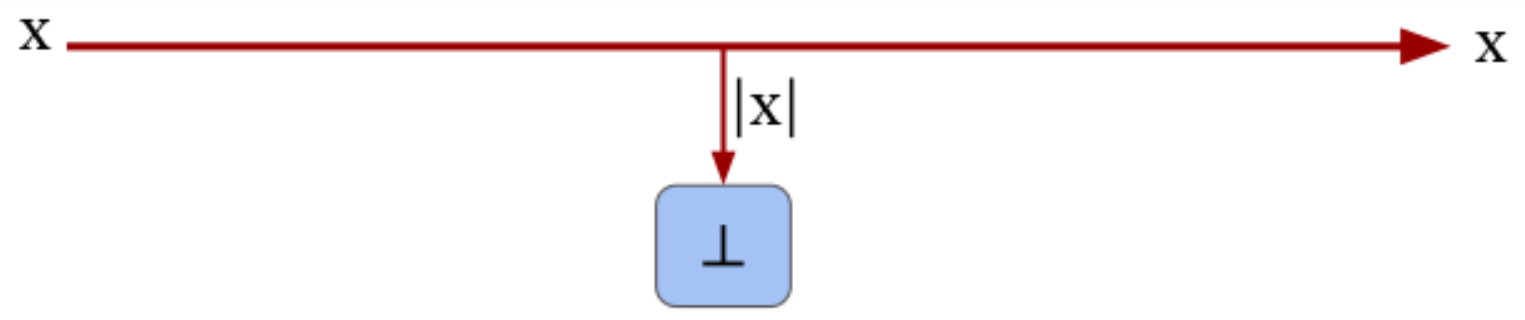}
    \caption{Filtered one-way private classical channel resource. It takes as input a bitstring $x$ at the left interface, outputs it at the right interface and leaks its size $|x|$ on the bottom interface. $\bot$ is a filter blocking the bottom interface to simulate an honest use of the resource. As we will see in the next section, this resource is equivalent to the one of Fig.~\ref{fig:OTP}.}
    \label{fig:PrivClassChan}
\end{figure}

Abstract cryptography is thus the theory of breaking down cryptographic processes into box-shaped resources that can be composed together in series or in parallel. Resources, which represent cryptographic primitives, can be transformed into other resources using converters and composition. A \textit{concrete} resource represents an actual protocol using physical systems and classical and/or quantum operations while an \textit{ideal} resource is the abstraction of the functionality achieved by the protocol. We say that a protocol $\pi = \{\pi_A, \pi_B\}$ constructs the resource $\mS$ out of $\R$ and write $\R\xrightarrow{\pi} \mS$. Such a construction is \textit{composable} if for all $\R,\mS$ and $\mathcal{T}$ resources and $\pi,\nu$ converters (protocols) such that $\R\xrightarrow{\pi} \mS$ and $\mS\xrightarrow{\nu} \mathcal{T}$ we have that

\begin{equation}
    \R \xrightarrow{\pi} \mS \wedge \mS \xrightarrow{\nu} \mathcal{T} \implies \R \xrightarrow{ \nu \circ \pi} \mathcal{T}.
\end{equation}

\subsection{Security definition and assumptions}
\label{subsec:SecDef}
To show that a protocol $\pi$ constructs the ideal $\mS$ out of concrete  resource $\R$, we have to capture an equivalence notion, with a metric $\approx$ such that  $\pi\R \approx \mS \overset{def}{\iff}\R\xrightarrow{\pi} \mS$. To that end, Abstract Cryptography introduces \textbf{Dinstiguishers}. They are abstract systems that are used to construct a pseudo-metric between two resources. They replace the notion of an adversary and also encompass any protocol that is run before, after or during the protocol being analyzed. As its name indicates, a distinguisher is used to distinguish between two resources $\R$ and $\mS$ by connecting to all their interfaces and outputting a single bit: a guess whether it is interacting with $\R$ or $\mS$ (see Fig.~\ref{fig:Distinguisher}). The advantage of a distinguisher $\mathbf{D}$ is given by
\begin{equation*}
    d^{\mathbf{D}}(\R,\mS)=|Pr[\mathbf{D}\R = 0] - Pr[\mathbf{D}\mS = 0]|,
\end{equation*}
where $\mathbf{D}\R$ is the output of $\mathbf{D}$ when interacting with $\R$. For example in Fig.~\ref{fig:Distinguisher}, replacing $\R$ with $\pi_A\pi_B\R\pi_E$ from Fig.\ref{fig:OTP} and $\mS$ with the filtered private authenticated classical channel resource from Fig.~\ref{fig:PrivClassChan}, we see that any distinguisher $\mathbf{D}$ will see the same output $x$ for any given input $x$ on any of the two resources. Hence we have that $d^{\mathbf{D}}(\R,\mS)=0$. For a class of distinguishers $\mathbb{D}$, the distinguishing advantage is defined as
\begin{equation*}
    d^{\mathbb{D}}(\R,\mS) = \underset{\mathbf{D}\in\mathbb{D}}{\sup} d^{\mathbf{D}}(\R,\mS).
\end{equation*}

\begin{figure}[!ht]
    \centering
    \includegraphics[width=13cm]{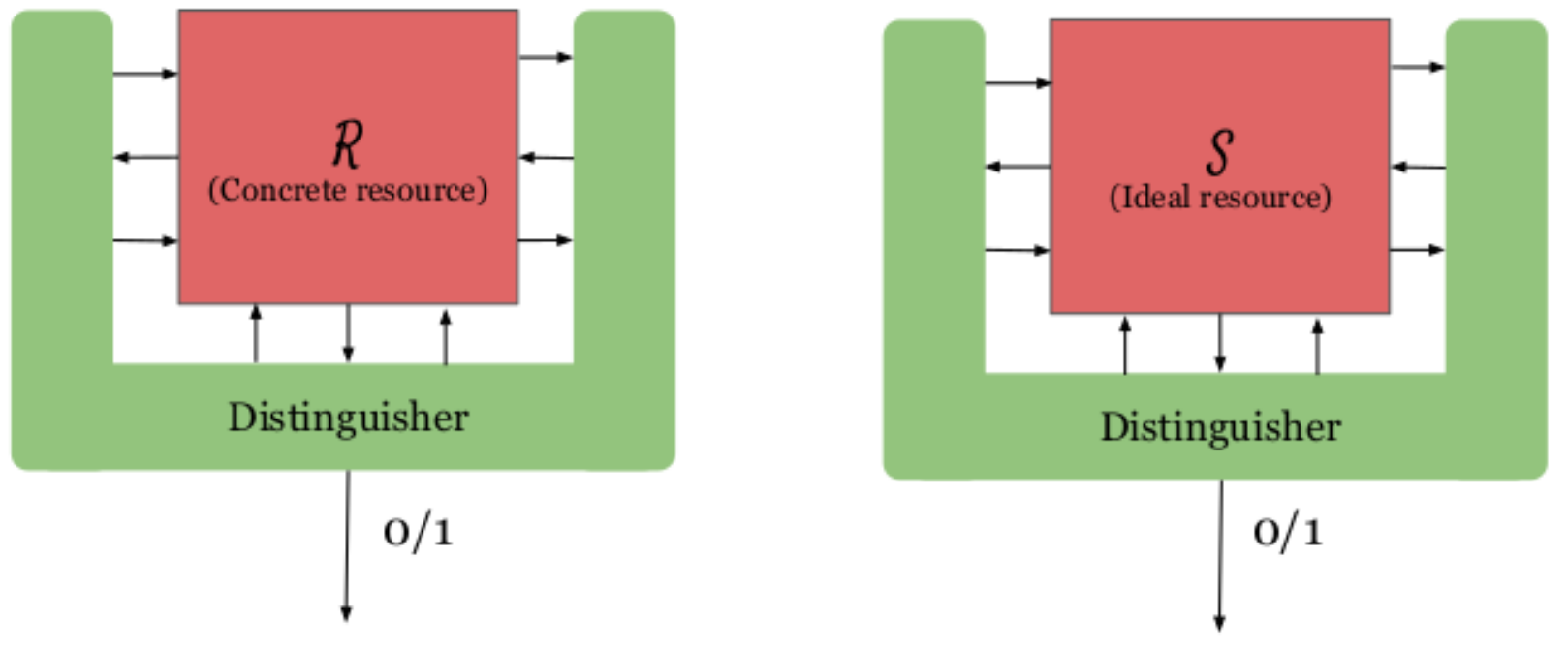}
    \caption{A distinguisher interacting with $\R$ and $\mS$. It has access to a complete description of the two systems and can choose the inputs of all players, receive their outputs and simultaneously fulfill the role of an adversary. After interaction, it must guess which resource is which. Replacing $\R$ by Fig.~\ref{fig:OTP} and $\mS$ by Fig.~\ref{fig:PrivClassChan}, no distinguisher is able to guess between the two resources.}
    \label{fig:Distinguisher}
\end{figure}

The distinguishing advantage is a pseudo-metric on the space satisfying all properties of a composable distance, namely identity, symmetry and triangle inequality. This allows to define \textit{equivalence relations} between resources: for a class of distinguishers $\mathbb{D}$ we say that $\R$ is equivalent (or $\epsilon$-close to) $\mS$ and write $\R \approx \mS$ (resp. $\R \approx_{\epsilon} \mS$) if $d^{\mathbb{D}}(\R,\mS)=0$ (resp. $d^{\mathbb{D}}(\R,\mS)\leq\epsilon$).

To summarize, converters describe mostly local and non-costly operations while resources can have non local functionalities and  extended computational power. Distinguishers are all powerful objects that represent the environment trying to guess between two resources. \\

We now have the necessary ingredients to present the notion of secure construction of a resource in AC. Let $\pi=\{\pi_i\}_{i=1}^n$ be a protocol run by $n$ parties using the concrete resource $\R$ that has interfaces $\mathcal{I}$ and let $\mS$ be an ideal resource with all the desired properties expected from the protocol. We say that \textbf{$\pi$ securely construct $\mS$ out of $\R$  within $\epsilon$} and write $\R\xrightarrow{(\pi,\epsilon)} \mS $ if there exist converters $\sigma = \{\sigma_{i}\}$ called \textit{simulators} such that
\begin{equation}
    \forall \mathcal{P}\subseteq\mathcal{I}, \pi_{\mathcal{P}}\R \approx_{\epsilon} \sigma_{\mathcal{I}\setminus\mathcal{P}}\mS,
\end{equation} 
with $\forall \mathcal{P}\subseteq\mathcal{I}, \pi_{\mathcal{P}}=\{\pi_i\}_{i\in\mathcal{P}}$.

This means that for each party $i$ that does not follow its protocol $\pi_i$, we are able to find a simulator $\sigma_i$ that locally simulates on the ideal resource the interfaces the party has access to on the concrete resource. Simulators don't represent actual concrete operations and should only be seen as a tool in the proof. For example, using a simulator $\sigma$ taking as input a size and producing a random bit string of this size, we have an equivalence relation between the concrete one-time pad resource with an dishonest Eve $\pi_A\pi_B\R$ and the ideal private classical channel resource on which we attach $\sigma$ (see Fig.~\ref{fig:OTPequiv}). This equivalence together with the equivalence of Fig.~\ref{fig:OTP} and Fig.~\ref{fig:PrivClassChan} (usually denoted as the correctness of the protocol) proves the composably secure construction of the private classical channel resource by the one-time pad protocol. In this case, those two equivalences suffice because we suppose Alice and Bob to be always honest in the one-time pad protocol. One must find simulators for each subset of possible dishonest parties to prove composable construction.

\begin{figure}[!ht]
    \centering
    \includegraphics[width = 16cm]{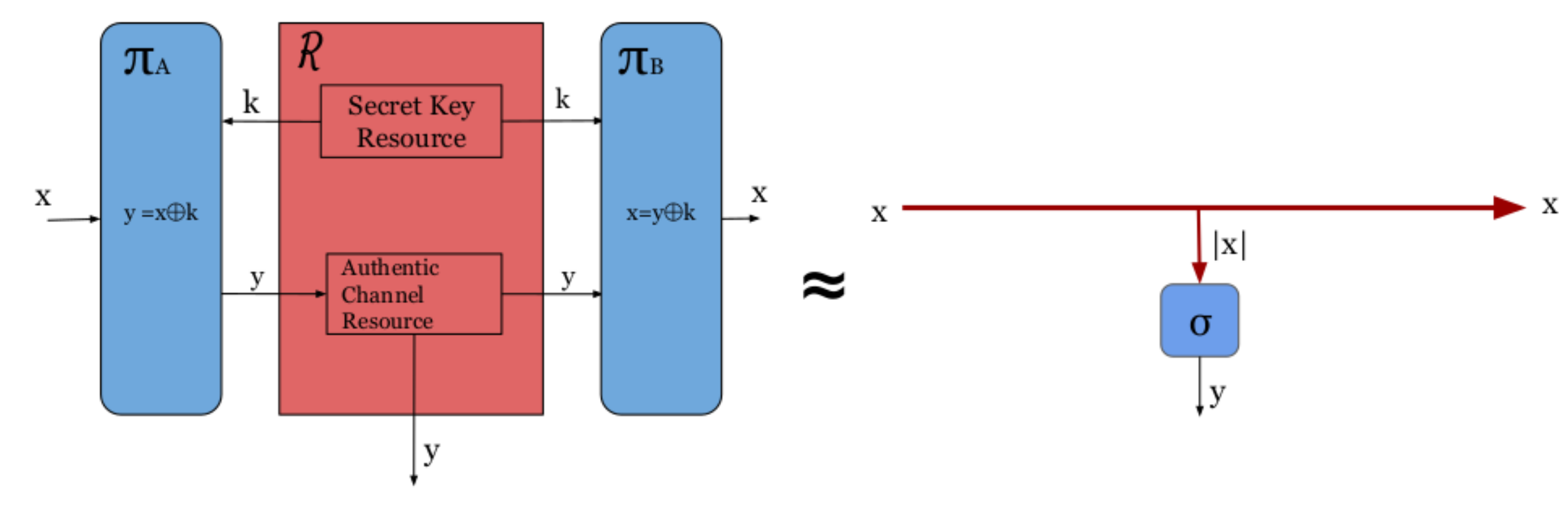}
    \caption{Equivalence between the One-time pad resource with a dishonest Eve and the ideal private classical channel resource with the simulator $\sigma$.}
    \label{fig:OTPequiv}
\end{figure}

The power of the class of distinguishers and simulators used to prove a secure construction determines the strength of the security proof. For example considering only classical distinguishers leads to security against classical adversaries while considering all powerful distinguishers leads to information-theoretic security. Ideally we would want the class of simulators to be restricted to a class of easily implementable converters and the set of distinguishers to be as general as possible. This leads to security statements such as ``We can easily construct the ideal resource $\mS$ from $\R$ and we can easily simulate any cheating behaviour such that even a very powerful distinguisher cannot tell the two resources apart''.
\vspace{1cm}

\section{An entanglement generation testing resource}
In this section we start with a verification protocol that is possible to be realised with current technology and show its equivalence with an ideal verification resource. Then we build upon it to construct an ideal \textit{verified GHZ sharing resource} that $n$ parties of a network can securely use to get a verified GHZ state as a subroutine of a bigger protocol even when the source is noisy or malicious.
\subsection{Multipartite entanglement verification protocol}
\label{subsec:MEVprotocol}
In the following we describe a protocol that securely constructs an ideal multipartite entanglement verification resource using only classical communication between $n$ parties that each receive a single qubit from a source of multipartite entanglement. We believe the following proof can be adapted to any stabilizer state verification where parties first receive a qubit and then do only local operations and classical communication (LOCC).\\ We first review the original protocol from~\cite{MEVresistant}, then introduce the ideal and concrete resources, and finally prove the secure construction. In this paper we will call ``Source'' the party controlling the entanglement source or the device itself interchangeably. We will consider authenticated classical communication and perfect quantum communication as any imperfection can be modeled as the source perfectly sending noisy states.

Our work is based on the work from~\cite{MEVresistant} where the authors develop and analyze an $n$-party verification protocol consisting only of classical communication and local quantum operations once the state is shared. One of the parties, called the \textit{Verifier}, has a central role in the protocol: it sends instructions to all parties and broadcasts the output of the verification. We recall the protocol of~\cite{MEVresistant}:

\begin{protocol}{Multipartite entanglement verification protocol}

\begin{enumerate}
    \item The source creates an $n$-qubit GHZ state and sends each qubit $i$ to party $i$ using a state generation resource and $n$ one-way quantum channels.
    \item The Verifier selects for each $i\in[n]$ a random input $x_{i}\in\{0,1\}$ such that $\sum_{i=1}^{n}x_{i}\equiv 0$ (mod 2) and sends it to the corresponding party via an authenticated classical channel resource. The Verifier keeps one to themselves.
    \item If $x_{i}=0$, party $i$ performs a Hadamard operation on their qubit. If $x_{i}=1$, party $i$ performs a $\sqrt{X}$ operation. 
    \item Each party $i$ measures their qubit in the $\{\ket{0},\ket{1}\}$ basis and sends their outcome $y_{i}$ to the Verifier via the classical channel.
    \item The Verifier accepts and outputs $b_{out}=0$ if and only if 
\begin{equation*}
    \sum_{i=1}^{n}y_{i}\equiv\frac{1}{2}\sum_{i=1}^{n}x_{i} \text{(mod 2)}
\end{equation*}
\end{enumerate}
\end{protocol}
\newpage
This protocol has been extensively studied and presents desirable properties that are expected from such a protocol: it is correct and for one round, its output depends on the distance between the state that was actually shared by the source and the GHZ state and the number of malicious parties. Indeed, for a state $\rho$ shared among the parties, $b_{out}$ is such that:
\begin{equation}
    b_{out} = \left\{
    \begin{array}{ll}
        0 & \mbox{with probability } 1 - \frac{\tau^{2}}{2} \\
        1 & \mbox{with probability } \frac{\tau^{2}}{2}
         
    \end{array}
    \right.
\end{equation}
with 
\begin{equation}
    \tau=\min_{U}\mbox{TD}(\ketbra{GHZ}{GHZ}, U\rho U^\dagger)
\end{equation}
where TD is the trace distance and $U$ is an operator acting only on the space of the dishonest parties.

This protocol is made to be repeated several rounds until some confidence is built on the fact that the source shares GHZ states. In order to prevent the source from sending a wrong state on the round where it is supposed to be used for computation, the authors of~\cite{MEVresistant} considered randomizing this round. They also randomize which party should play the role of the Verifier at each verification round to prevent malicious actions from the parties. Thus, all parties have access to a trusted common random source that gives, at each round, a random bit $C\in\{0,1\}$ used as a security parameter and an identifier for one party $i\in[n]$. If $C=0$ (which happens with some probability $P_C$), the state is used for computation. If $C=1$ (which happens with probability $1-P_C$), the parties perform the above verification protocol with $i$ as the Verifier and restart only if the state is accepted. It has been proven that the probability that the protocol has not aborted and that a state $\rho$ such that $\mbox{TD}(\ketbra{GHZ}{GHZ}, U\rho U^\dagger)\geq \epsilon$ where $U$ is an operator on the space of the $k$ dishonest parties is used for computation is less than $P_C=\frac{4n}{k\epsilon^2}$.

The security properties in~\cite{MEVresistant} are proven in a game-based framework hence are not composable. There is for example a strategy where, when performing the protocol multiple times in a row, a malicious coalition of parties and source could increase the probability that the honest parties accept a state that is not a GHZ state. It has indeed been noticed that if we allow for a 50\% loss rate in the quantum communication, there exists a strategy for dishonest players that increases their probability of making the others accept a faulty state. This problem has been later solved in~\cite{MEVexperimental} where a loss-tolerant variation of this protocol that presents the same security properties, called the $\theta$-protocol, was implemented in a photonic setting. It mainly consists in changing the classical instructions $\Xfamily$ sent by the Verifier to angles $\Theta=\{\theta_i\}_{i=1}^n$ indicating the rotated measurement basis for each party. This protocol increases the loss that can be tolerated by the protocol, but still the dishonest parties can increase their cheating probability if the losses are high enough. For simplicity we will consider only the version of the protocol presented above (called the XY-protocol) but the following proof can be straightforwardly extended to match the $\theta$-protocol as well.\newpage

Composability issues are due to somewhat hidden assumptions in the original game-based model, such as the lossless channel assumption. Moreover, it is assumed that the dishonest parties are not disturbing the classical communication between the players or the random choosing of the Verifier and that they don't have access to the quantum memories of the honest parties. This may threaten the security of the protocol when used as a subroutine of a bigger one. Finally, the game-based framework assumes a specific strategy from the dishonest parties: they are actively trying to convince the honest parties that they all share a GHZ state when they don't. This of course makes sense when we look at the protocol in a stand-alone setting, but may not be the case when the protocol is part of a bigger, more complicated one. Using the AC framework, we can deal with all possible dishonest strategies with the help of distinguishers. It forces us to explicit every input and output of each party and to avoid hidden assumptions on dishonest behaviour and on physical resources. Additionally, it gives a box-like form for the protocol that corresponds to the way we think about applications in the quantum Internet.
\vspace{-0.2cm}
\subsection{Ideal Resource}
Let us now present the ideal resource for practical multipartite entanglement verification. Consider a source using physical resources to create and share an $n$-qubit quantum state to $n$ parties expecting a qubit from a GHZ state. Our resource, called $\MEV_C$, aims to get a sense of how trustworthy the source is, by verifying that it sends a state at least close to the GHZ state. It also has a built-in parameter $C$ that makes the resource output qubits with some probability known by all $n+1$ parties using the resource.

This black box (see Fig.~\ref{fig:IdealHonestMEV} for a 3-party example)  has $n+1$ input interfaces.  All $n$ parties wishing to test a source collectively send a start signal to the input interfaces of resource. The last interface is the source interface that gets a classical description of the state sent by the source.
Upon reception of the start inputs, $\MEV_C$ will forward the start signals to the source interface then wait for the classical description of an $n$-qubit quantum state $\rho$. After that, it outputs on all interfaces a bit $C=0$ with probability $p$, or $C=1$ with probability $1-p$. This bit indicates if the resource is going to output qubits or a verification bit $b_{out}$. The probability distribution of $C$ can be tuned freely to match any distribution. If $C=0$ it then outputs to each party a qubit of $\rho$ and if $C=1$ it computes a bit $b_{out}$ indicating if the state shared by the source is close to the GHZ state and sends it to all parties. This box is made to be composed with itself in series with a very small $p$ until all parties get a qubit or $b_{out}=1$.  \newpage

\begin{figure}[!ht]
    \centering
    \includegraphics[width=10cm]{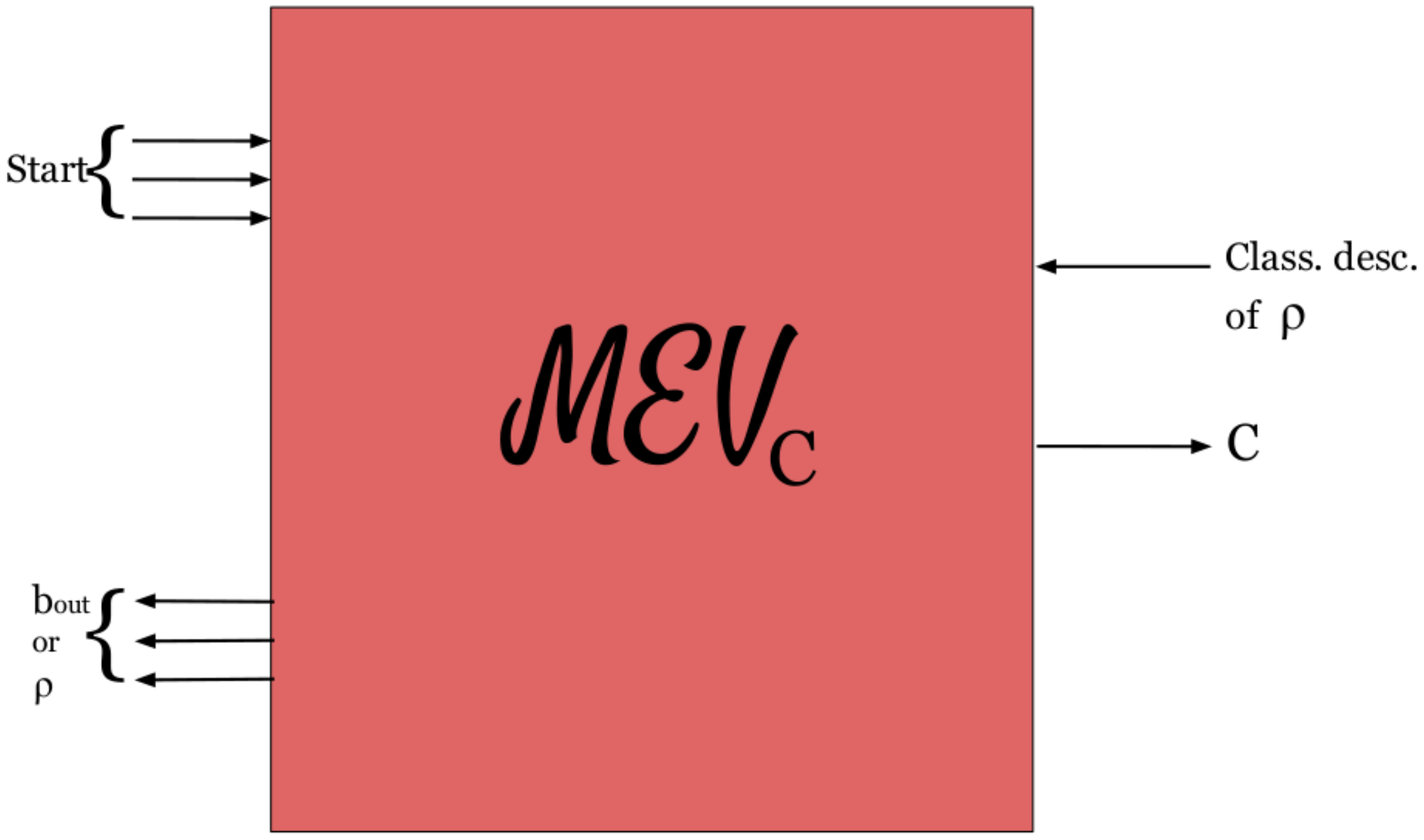}
    \caption{The $\MEV_C$ resource for $n=3$ parties. For readability we put the parties interfaces on the left and the source interface on the right. The left interfaces are ``collective interfaces'' meaning that inputs are sent collectively by all the parties and the output is obtained by all parties.}
    \label{fig:IdealHonestMEV}
\end{figure}

The output bit $b_{out}$ should indicate whether the state shared by the source is $\epsilon$-close to the GHZ state for some $\epsilon$. At this level of abstraction, we don't care whether this behaviour comes from a faulty device or an actual adversary trying to manipulate the source. Our $\MEV_C$ resource outputs a $b_{out}$ such that

\begin{equation}
    \label{output}
    b_{out} = \left\{
    \begin{array}{ll}
        0 & \mbox{with probability } 1 - \frac{\tau^{2}}{2} \\
        1 & \mbox{with probability } \frac{\tau^{2}}{2}
         
    \end{array}
    \right.
\end{equation}
with 
\begin{equation}
    \tau=\mbox{TD}(\ketbra{GHZ}{GHZ}, \rho),
\end{equation}
where TD is the trace distance. The output of the resource is thus probabilistic, and depends on the trace distance between the input state $\rho$ and the GHZ state and on the security parameter $C$. Notice that this $b_{out}$ follows the same distribution as the one of the original protocol (see Sec.~\ref{subsec:MEVprotocol}) in the case where all parties are honest. The security parameter $C$ is added to the verification procedure to make the resource suitable for practical use in larger protocols where one wants to eventually get shared entanglement between the parties when the source is acting correctly.

Now in the case of the honest use of the resource, the source interface is given as input a classical description of the GHZ state. Moreover the output $C$ remains hidden to the outside world. In AC this is modeled by using converters, the so called \textit{filters}, that block the adversarial interfaces (thus filtering the outputs) and send a specific input. In our case we define one filter $\bot$ that enforces the honest use of $\MEV_C$. It blocks any deviation from the outside world and upon reception of a start signal, it sends a classical description of a GHZ state to $\MEV_C$ (see Fig.~\ref{fig:Bot}). It has its inside interface plugged into the $\MEV_C$ resource and its outside interface open to inputs from any distinguisher (see \cite{Ratcheting} for extended discussion about filtering and the inclusion of events in AC). \newpage

\begin{figure}[!ht]
    \centering
    \includegraphics[width=8cm]{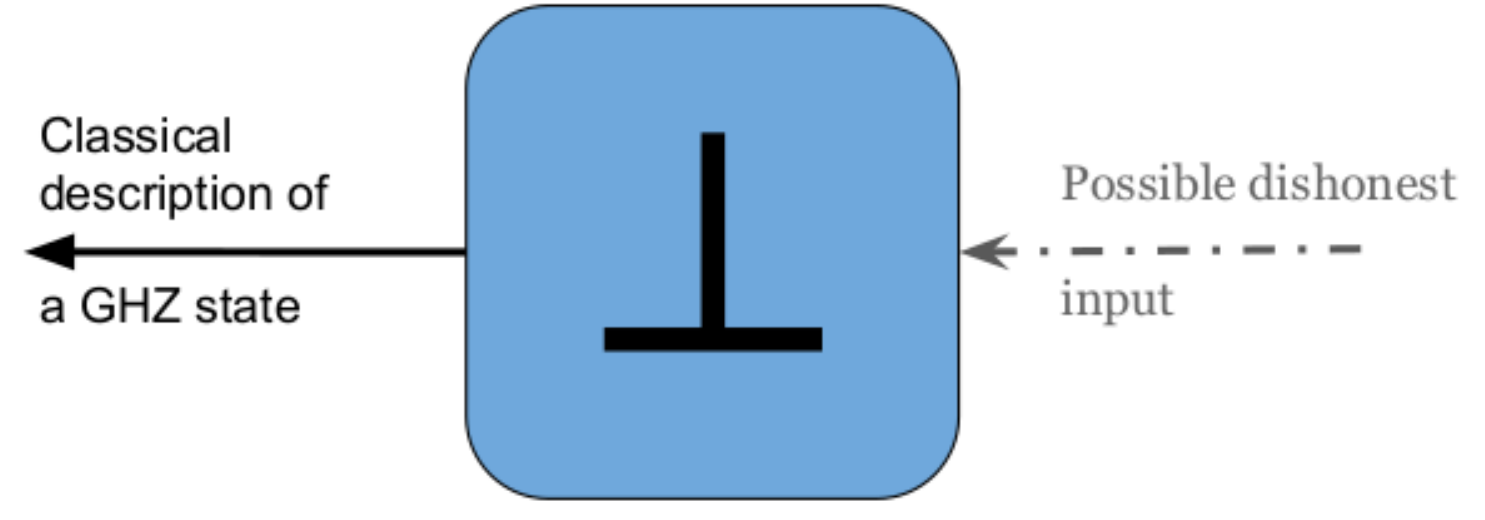}
    \caption{Filter $\bot$. Upon reception of a start input, it outputs a classical description of a GHZ state on its inside interface and blocks any input at its outside or inside interface.}
    \label{fig:Bot}
\end{figure}

Composed with $\MEV_C$, they form our ideal resource $\MEVfiltered$ for secure verified GHZ sharing or source testing (see Fig.~\ref{fig:IdealHonestFilteredMEV} for a 3-party example). 
\begin{figure}[!ht]
    \centering
    \includegraphics[width=12cm]{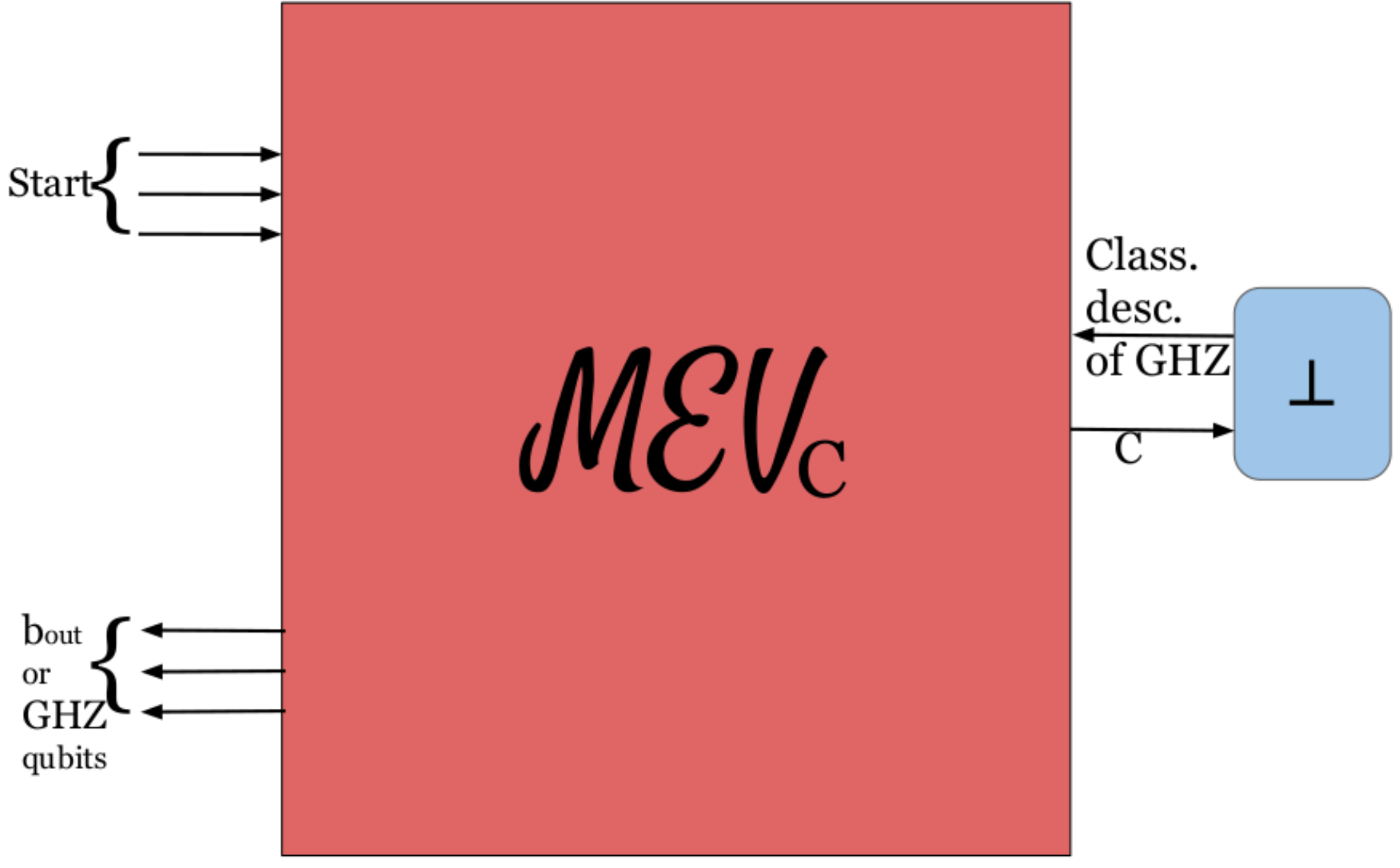}
    \caption{The ideal filtered $\MEVfiltered$ resource for $n=3$ parties. On the left are the ``collective interfaces'' that are used by the parties to collectively send the start signal and receive the output. On the right is the source interface filtered by $\bot$, that blocks any input and sends a specific message to the resource.}
    \label{fig:IdealHonestFilteredMEV}
\end{figure}

\newpage
\subsection{Concrete Resource}

We will now make explicit the protocol in the AC framework, by defining the resources used and the converters for each party. We first define the concrete resources, which in this case are abstractions of physical resources. Namely we define the state generator resource, the one-way quantum channel resource, the two-way classical channel resource and two multiparty classical computation oracles.

The state generator resource (see Fig.~\ref{fig:StateGen}) represents a perfect source of quantum states able to create arbitrary quantum states of at most $n$ qubits. Receiving a classical description of an $n$-qubit state $\rho$ on its input interface it will output each qubit of $\rho$ on its $n$ output interfaces. This resource can be used to model imperfect sources by including the noise in the classical description of the state given as input. We consider that no information is leaked by this resource about the state that it creates, as it is the more restricting scenario in our security proof. In Sec.~\ref{subsec:PracticalImp} we discuss the realization of such a resource. 

\begin{figure}[!ht]
    \centering
    \includegraphics[width=7cm]{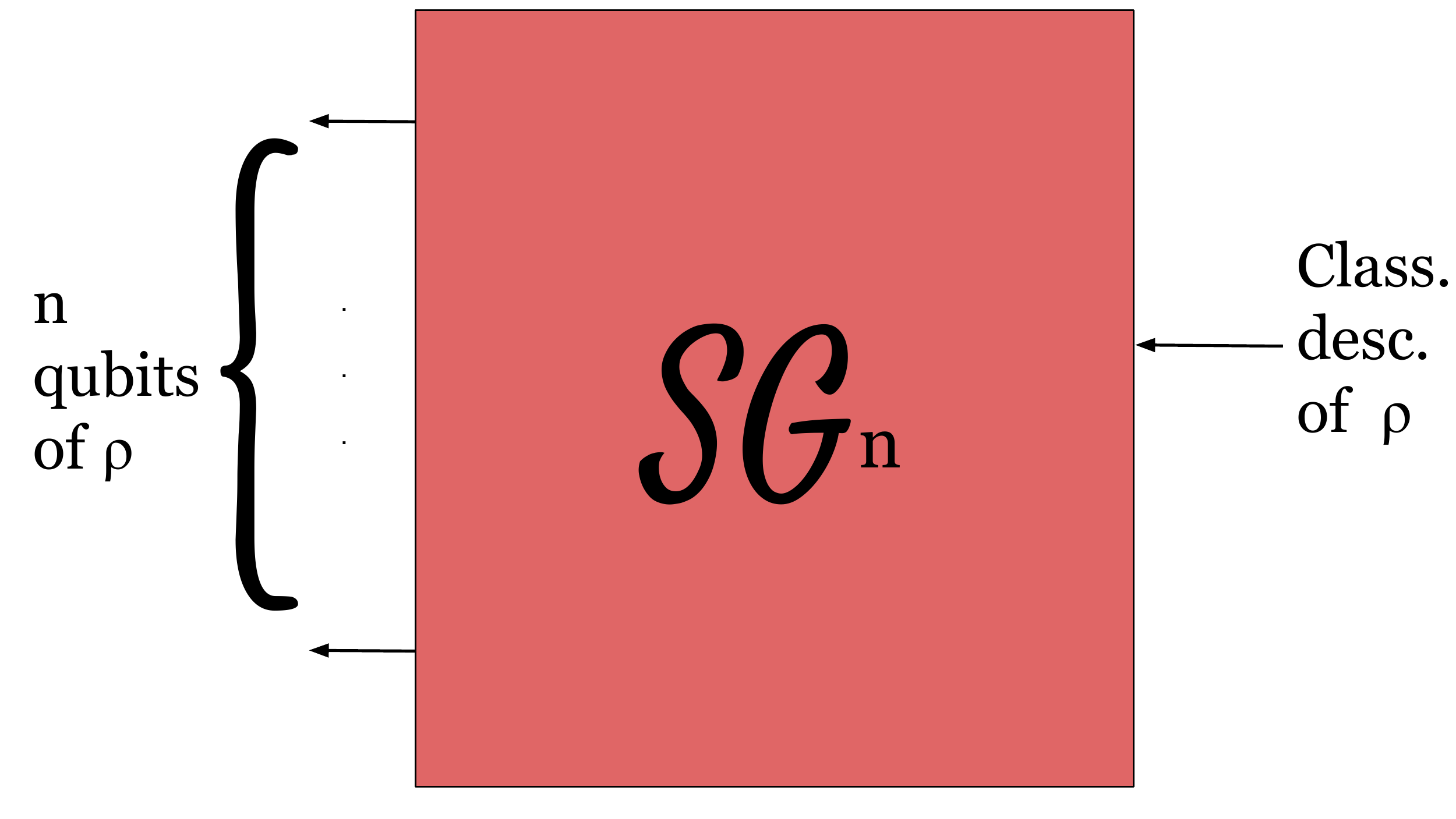}
    \caption{State generator resource.}
    \label{fig:StateGen}
\end{figure}

The $\mathcal{SG}_n$ resource is to be composed with $n$ quantum channel resources which we draw as arrows with a Q (see Fig.~\ref{fig:QChannel}). A quantum channel resource in our case is a perfect private authenticated quantum channel which takes as input a qubit and outputs the same qubit at a different place without any leakage.

\begin{figure}[!ht]
    \centering
    \includegraphics[width=7cm]{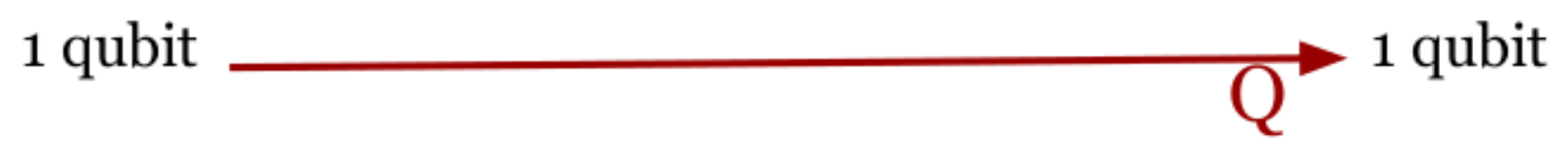}
    \caption{Quantum channel resource.}
    \label{fig:QChannel}
\end{figure}

Finally the classical communication between parties is modeled through classical channel resources which we simply draw as arrows  (see Fig.~\ref{fig:CChannel}). They take bits at any of their interfaces and transmit them to the other interface. We suppose those channels to be authenticated: to any other party watching the channel, it will also output of the message transmitted without the possibility to alter it. In order not to overload the figures, we don't represent this leaking interface when all parties are honest but we do when considering a dishonest source watching over the classical communication.

\begin{figure}[!ht]
    \centering
    \includegraphics[width=7cm]{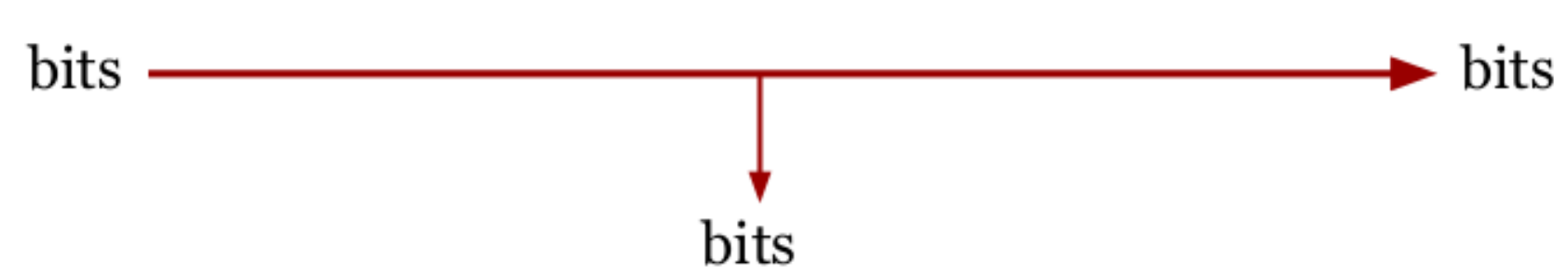}
    \caption{Classical channel resource.}
    \label{fig:CChannel}
\end{figure}
\newpage
We will abstract multiparty classical functionalities achieved by the parties by the use of oracle queries. All parties can collectively call two oracles $\mathcal{O}_C$ and $\mathcal{O}_v$ that respectively give a common random bit $C$ and a common random party identifier $v$ to each party. We will draw them as boxes with $n$ input interfaces expecting a collective query from the parties and $n$ output interface broadcasting $C$ or $v$. This is a modeling of classical communication protocols that provide random bits and random identifiers to the parties. It is not considered private and the values of $C$ and $v$ are available to any malicious party watching over the classical communication. We will discuss how these oracles can be replaced by actual classical protocols in Sec.~\ref{subsec:PracticalImp}. Moreover, each party is locally equipped with a quantum register able to perfectly store a qubit for the time required by the protocol on which they can perform one-qubit operations and measurements. Quantum registers will not be drawn in the figures for simplification purposes as well as the leakage interfaces of the classical channels, but they should not be forgotten as assumptions in our model, particularly when considering the case of a malicious party. Since we consider here all parties to be honest during the verification protocol, we only draw resources and interfaces of interest.

We call $\mathcal{R}$ the resource constructed by a state generator resource composed in series to a collection of $n$ quantum channel resources and in parallel to $n$ classical channel resources, $\mathcal{O}_C$ and $\mathcal{O}_v$. $\mathcal{R}$ formally defines the creation of a state, common classical randomness generator protocols, the (2-way) classical communication between the Verifier and the parties and the (one-way) quantum communication between the source and the parties.\\

The next step is to present the converters $\pifamily$ and $\pi_S$ that represent the protocols followed by each party and the source. They model the local computation of each party during an honest round of the protocol and can be represented either as algorithms or as boxes and arrows, that both expect some input from which they produce output to send to the resources. Their quantum abilities are equal to the ones that we give to local parties performing the multipartite entanglement verification protocol \cite{MEVresistant}.

We start with $\pifamily$ representing the protocol followed by each party $i$. $i$ is a binary identifier for each party, but for simplicity we represent it with $i\in[n]$ and we write $\pi_{[n]}$ for the parallel composition of all $\{\pi_i\}_{i=1}^n$.

\begin{protocol}{Protocol for the $i_{\text{th}}$ party $\pi_{i}$}
\begin{enumerate}
    \item Ask the source to send a GHZ state. Wait for the reception of the qubit.
    \item After reception, query $\mathcal{O}_C$, get $C$ and output it. If $C=0$, keep the qubit (output the qubit to party).
    \item If $C=1$, \begin{enumerate}
        \item Query $\mathcal{O}_v$, get $v$.
        \item If $v\neq i$ \begin{enumerate}
            \item Wait for the reception of $x_i$.
            \item If $x_{i}=0$, perform a Hadamard operation on the qubit. If $x_{i}=1$, perform a $\sqrt{X}$ operation on the qubit. 
            \item Measure in the $\{\ket{0},\ket{1}\}$ basis.
            \item Send the outcome $y_{i}$ to the Verifier via the classical channel resource.
            \end{enumerate}
        \item If $v=i$ \begin{enumerate}
            \item Create a random bit string $X=\{x_{i}\}_{i=1}^{n}$ with $x_{i}\in\{0,1\}$ such that $\sum_{i=1}^{n}x_{i}\equiv 0$ (mod 2) 
            \item Send $x_{i}$ it to party $i$ via a classical channel resource, keep $x_{v}$.
            \item Follow steps (iii).b.2 to (iii).b.4 and get $y_v$
            \item Wait for the reception of all the other $y_{i}$.
            \item  Upon the reception of all the $y_{i}$, output 0 to all if 
            \begin{equation*}
    \sum_{i=1}^{n}y_{i}\equiv\frac{1}{2}\sum_{i=1}^{n}x_{i} \text{(mod 2)}
            \end{equation*}
            and 1 otherwise.
          \end{enumerate}
\end{enumerate}
\end{enumerate}

\end{protocol}

The actual verification protocol is thus seen here as a subroutine (steps (iii).(a) to (iii).(c)). All parties start by collectively querying a qubit and $C$ and then, depending on the value of $C$, they either keep the qubit or do the verification protocol. During the verification protocol, one party is chosen to be the Verifier and after some classical communication and local quantum operations, the Verifier sends the output $b_{out}$ to all parties.

The last converter, $\pi_S$, represents the local operation that an honest source would perform using the source to create an $n$-qubit GHZ state and send it to the parties. That is simply sending, upon receiving a signal from the parties, a classical description of the GHZ state to the $\mathcal{SG}_n$ resource. It implies that the source is not watching the classical communication between the parties at any point. Functioning like a filter, this converter is made to be removed in case the source is noisy or some malicious party takes control of the source to reveal new interfaces to the outside world.

\begin{protocol}{Protocol for the source $\pi_{S}$ }
\begin{enumerate}
    \item  Upon reception of a query by the parties, send a classical description of the GHZ state to the $\mathcal{SG}_n$ resource.
\end{enumerate}
\end{protocol}
\vspace{1cm}
Together with $\R$, this completes the definition of the concrete multipartite entanglement resource $\ConcreteProt$ (see Fig.~\ref{fig:ConcreteProt} for a 3-party example), which takes as input a start signal and outputs a bit $C$ then a qubit from a GHZ state to each party or a bit $b_{out}=0$.

\begin{figure}[!ht]
    \centering
    \includegraphics[width=16.5cm]{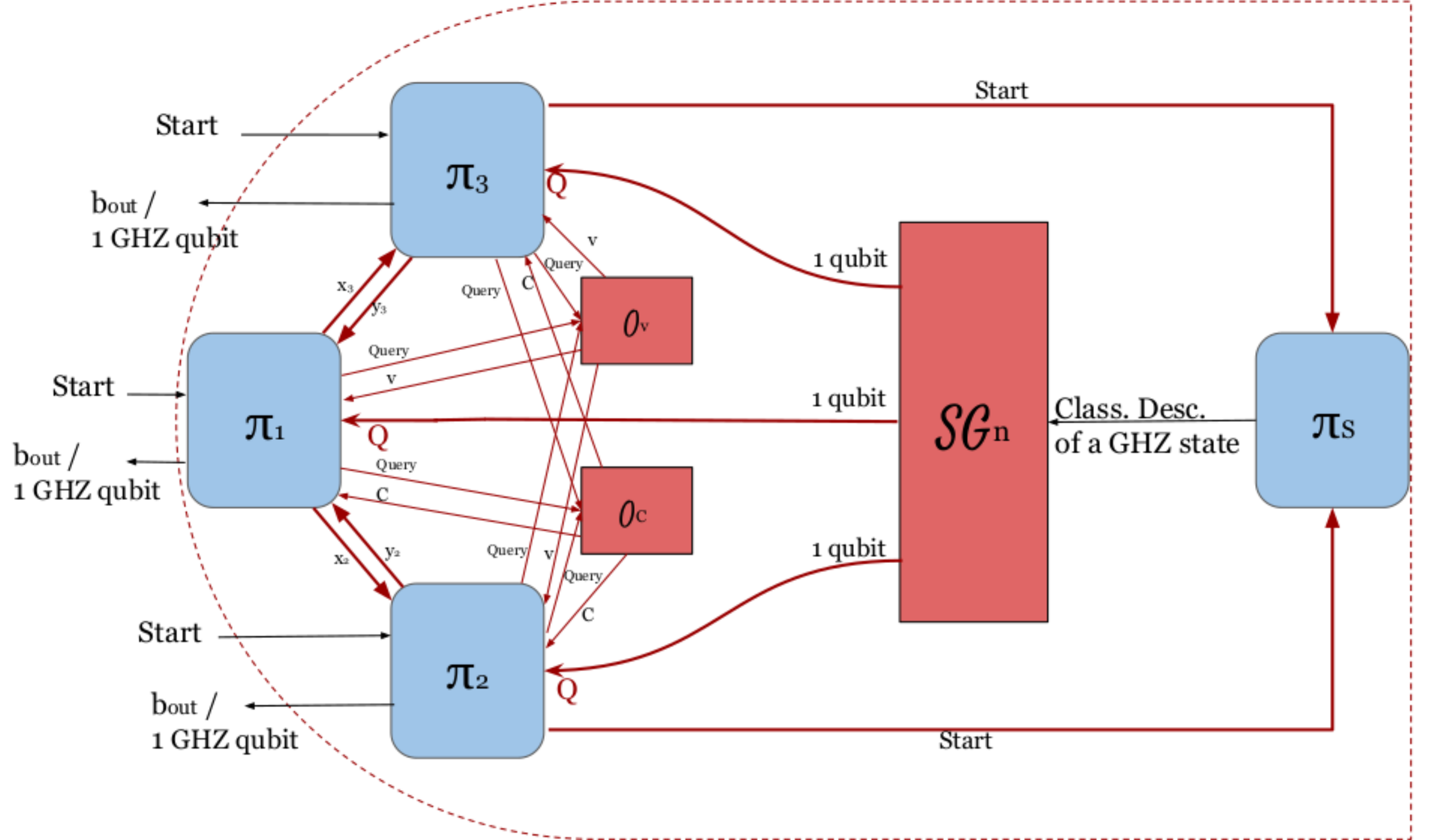}
    \caption{The $\ConcreteProt$ Resource within the dotted red line for $n=3$ parties wishing to test a source, when party 1 is chosen to be the Verifier. We represent resources in red and converters in blue. We recall the timeline of the protocol: (1) all the $\pi_i$ send a start signal to $\pi_S$ that sends a classical description of a GHZ state to the $\mathcal{SG}_n$ resource. (2) Upon reception of the qubit, they send a query to $\mathcal{O}_C$ and get $C$. (3) If $C=0$ they output a GHZ qubit and if $C=1$ the parties query $\mathcal{O}_v$ and get $v$ (here party 1). (4) The Verifier sends instructions $\Xfamily$ (here $\{x_2,x_3\}$) to others parties, get outcomes $Y=\{y_i\}_{i=1}^n $ (here $\{y_2,y_3\}$) and computes and broadcasts $b_{out}$. To avoid overloading the figure we don't represent quantum memories as well as the classical signals going from $\pi_1$ to $\pi_S$. As $\pi_S$ represents honest behaviour from the source, we also don't represent the leakage of information from the classical channels.}
    \label{fig:ConcreteProt}
\end{figure}\newpage

\subsection{Security Analysis}
\label{subsec:SecAnal}
We come now to the proof of the main claim of this paper, namely that the multipartite entanglement verification protocol $\pi$ securely constructs the $\MEV_C$ resource out of $\R$. We proceed as expected from the security definition of Sec.~\ref{subsec:SecDef} that is by finding simulators to emulate local dishonest behaviour on the ideal resource. A dishonest behavior from a party is simply modeled by removing the associated converter and making new free interfaces accessible to a distinguisher. Simulators should render the ideal resource indistinguishable from dishonest concrete resources.

We will only consider cases that are of interest for our security claim which are when all parties are honest and when the source is noisy or malicious. The case of dishonest parties possibly tampering the source is discussed in Sec.~\ref{subsec:MaliciousParty}, but it appears that composable security cannot be proven in the AC framework when a party is dishonest. Distinguishers in this section are all powerful, both classicaly and quantumly.

\subsubsection{Correctness.}
The first step of the proof corresponds to the correctness of the multipartite entanglement verification protocol, meaning that when all parties are honest and the source is honest the parties all get either a qubit from a GHZ state or a bit $b_{out}=0$.

\begin{theorem}
The multipartite entanglement verification protocol emulates the filtered ideal resource $\MEVfiltered$.
\end{theorem}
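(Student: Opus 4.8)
The plan is to read correctness as the statement that the concrete resource is \emph{perfectly} indistinguishable from the filtered ideal one, i.e.\ that $d^{\mathbb{D}}(\ConcreteProt,\MEVfiltered)=0$ for the class $\mathbb{D}$ of all (classical and quantum) distinguishers. First I would observe that both resources expose exactly the same outside interfaces: the collective party interfaces, which carry a start signal inward and successively the bit $C$ and then either $n$ qubits or the bit $b_{out}$ outward, together with a source interface that is blocked on both sides ($\pi_S$ acting as a filter on the concrete side, $\bot$ on the ideal side). Since both $\pi_S$ and $\bot$ absorb any external input and emit the classical description of $\ketbra{GHZ}{GHZ}$ inward, a distinguisher cannot interact with the source interface and its entire view reduces to the tuple it reads back on the party interfaces. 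The claim therefore amounts to showing that this output tuple has identical distribution under the two resources.

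Next I would split on the value of $C$. By construction the oracle $\mathcal{O}_C$ is tuned so that $C=0$ with the same probability $p$ used by $\MEV_C$, so the marginal of the announced bit already agrees. On the branch $C=0$ the concrete resource routes the state produced by $\pi_S$ through $\mathcal{SG}_n$ and the $n$ quantum channels, delivering to party $i$ the $i$-th qubit of $\ketbra{GHZ}{GHZ}$, while the ideal resource, fed the same $\rho=\ketbra{GHZ}{GHZ}$ by $\bot$, outputs the $i$-th qubit of $\rho$; these coincide qubit by qubit, so the two branches are identical. On the branch $C=1$ the ideal side is immediate: the filter guarantees $\rho=\ketbra{GHZ}{GHZ}$, hence $\tau=\mbox{TD}(\ketbra{GHZ}{GHZ},\rho)=0$ and $b_{out}=0$ with probability $1-\tfrac{\tau^2}{2}=1$.

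The key step, carrying the actual content, is to show that the concrete $C=1$ branch likewise yields $b_{out}=0$ with certainty, i.e.\ that the verification subroutine always accepts a perfect GHZ state when every party is honest. I would argue this with the stabilizer formalism. Applying $H$ (case $x_i=0$) and measuring in the computational basis is an $X_i$ measurement, while applying $\sqrt{X}$ (case $x_i=1$) and measuring is a $Y_i$ measurement, with outcome bit $y_i$ recording the eigenvalue $(-1)^{y_i}$. Writing $M_i=X_i$ if $x_i=0$ and $M_i=Y_i$ if $x_i=1$, and using $Y_i=iX_iZ_i$, one obtains
\begin{equation*}
\bigotimes_{i=1}^{n} M_i = i^{\sum_i x_i}\,(X_1\cdots X_n)\!\!\prod_{i:\,x_i=1}\!\!Z_i .
\end{equation*}
Now $X_1\cdots X_n$ stabilizes $\ket{GHZ}$, and since the Verifier imposes $\sum_i x_i$ even the factor $\prod_{i:x_i=1}Z_i$ is an even-weight product of $Z$'s that also stabilizes $\ket{GHZ}$ (pair the $Z$'s into operators $Z_iZ_j$, each a GHZ stabilizer); moreover $i^{\sum_i x_i}=(-1)^{\frac12\sum_i x_i}$. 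Hence $\ket{GHZ}$ is an eigenstate of $\bigotimes_i M_i$ with eigenvalue $(-1)^{\frac12\sum_i x_i}$, so the commuting single-qubit measurements give, with probability one, $(-1)^{\sum_i y_i}=(-1)^{\frac12\sum_i x_i}$, i.e.\ $\sum_i y_i\equiv\frac12\sum_i x_i \pmod 2$. This is exactly the Verifier's acceptance test, so $b_{out}=0$ deterministically.

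Finally I would close the argument by noting that the remaining traffic inside the concrete resource, namely the instructions $x_i$, the replies $y_i$, and the identifier $v$ drawn from $\mathcal{O}_v$, never reaches an open interface when all parties and the source are honest (the leakage ports of the authenticated channels are suppressed and $v$ is consumed only internally by $\pi_i$). Thus the distinguisher's view is precisely $(C,\text{qubits})$ or $(C,b_{out}=0)$ in both worlds, with matching distributions, so no simulator is needed on this honest instance and $d^{\mathbb{D}}(\ConcreteProt,\MEVfiltered)=0$, giving $\ConcreteProt\approx\MEVfiltered$. I expect the stabilizer computation above to be the only genuinely non-trivial step; everything else is bookkeeping about which interfaces are exposed.
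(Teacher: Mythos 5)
Your proof is correct and follows the same route as the paper's: both arguments reduce the claim to comparing the distinguisher's view branch by branch on the bit $C$, noting that the marginal of $C$ agrees because $p$ is tuned to the distribution of $\mathcal{O}_C$, that the $C=0$ branches deliver identical GHZ qubits, and that the $C=1$ branch always yields $b_{out}=0$ because honest parties holding a perfect GHZ state always satisfy $\sum_{i} y_{i}\equiv\frac{1}{2}\sum_{i} x_{i} \pmod 2$. The one genuine difference is that the paper imports this last fact by citation to the game-based analysis of~\cite{MEVresistant}, whereas you prove it in-line; your stabilizer computation is correct ($H$ followed by a computational-basis measurement is an $X$ measurement, $\sqrt{X}$ followed by one is exactly a $Y$ measurement since $\sqrt{X}^\dagger Z\sqrt{X}=Y$, and the identity $Y=iXZ$ together with the evenness of $\sum_i x_i$ gives the eigenvalue $(-1)^{\frac{1}{2}\sum_i x_i}$ for the product observable), which makes your proof self-contained where the paper's is not. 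You also make explicit two pieces of bookkeeping the paper leaves implicit: that the source interface is blocked on both sides ($\pi_S$ functioning as a filter on the concrete side, $\bot$ on the ideal side), and that the internal classical traffic $v$, $\{x_i\}$, $\{y_i\}$ reaches no open interface in the all-honest case since the authenticated channels' leakage ports are covered; both points are consistent with the paper's modeling and worth stating. In short: same decomposition and same conclusion $d^{\mathbf{D}}(\ConcreteProt,\MEVfiltered)=0$ for every distinguisher $\mathbf{D}$, with your version trading a citation for an explicit (and correct) stabilizer lemma.
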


\begin{proof}
Let $\mathbf{D}$ be an all powerful distinguisher trying to guess between $\MEVfiltered$ and $\ConcreteProt$. Let us look at the distribution of outputs that it will get from them.

$\mathbf{D}$ first sends start signals to both resources. When interacting with $\MEVfiltered$, it gets $C=1$ and $b_{out}=0$ with some probability $1-p$ and  $C=0$ and $n$ qubits from a GHZ state with probability $p$ by definition of our resource. Throughout this paper, the probability distribution of $p$ is tuned to match the one of $\mathcal{O}_C$. When interacting with $\ConcreteProt$, the distinguisher thus performs the concrete multipartite entanglement verification protocol with the same probability $p$. If all parties share a GHZ, the condition $\sum_{i=1}^{n}y_{i}\equiv\frac{1}{2}\sum_{i=1}^{n}x_{i} \text{(mod 2)}$ is always fulfilled (see~\cite{MEVresistant} for complete proof) so the Verifier always sends $b_{out}=0$ at the end. Hence, $\mathbf{D}$ gets $C=1$ and $b_{out}=0$ with probability $1-p$ and $C=0$ and $n$ qubits from a GHZ state with probability $p$.

We can conclude that for any distinguisher $\mathbf{D}$,  $d^{\mathbf{D}}(\ConcreteProt,\MEVfiltered)=0$ hence
\begin{equation}
    \ConcreteProt\approx\MEVfiltered.
\end{equation}
\end{proof}
\newpage
\subsubsection{Dishonest source.}
Let us now look at the case of a dishonest or noisy source. As custom in AC, we model this by removing the filter $\bot$ of the ideal resource and the protocol $\pi_S$ of the concrete one (see Fig.~\ref{fig:ConcreteProtDis}). This leaves a new interface free for a distinguisher to send in a classical description of a state $\rho$. Because we do not use private but rather authenticated classical communication, the distinguisher also receives all leakage of classical communication between the parties and when they query oracles.

\begin{figure}[!ht]
    \centering
    \includegraphics[width=16cm]{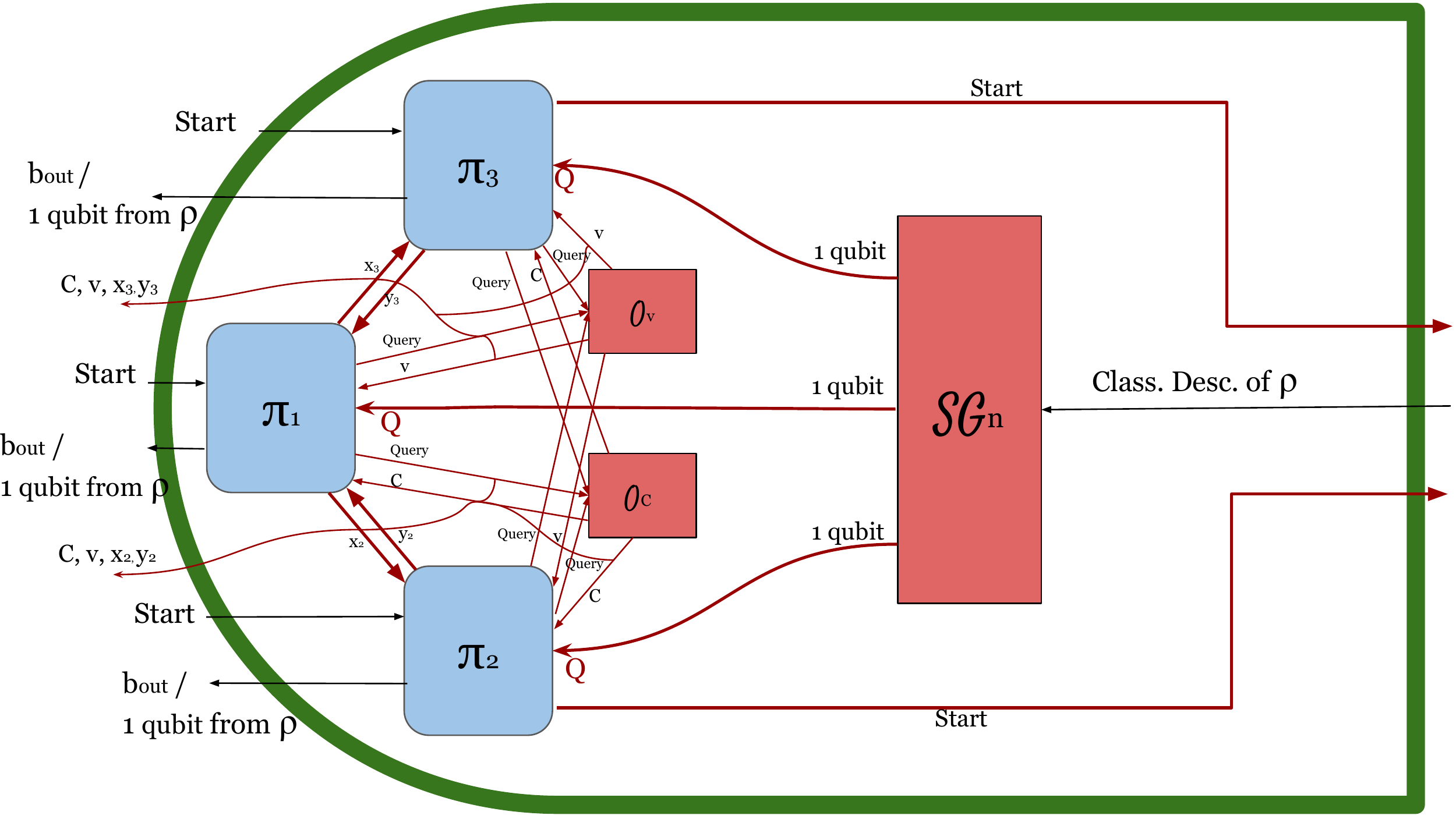}
    \caption{The $\pi_{[n]}\mathcal{R}$ resource for $n=3$ parties when party 1 is chosen as the Verifier, accessed by a distinguisher (in green). To not overload the figure we join all leakages interfaces from the classical channel resources into two arrows, but they should each be considered as a different interface the distinguisher has access to.}
    \label{fig:ConcreteProtDis}
\end{figure}

In order to prove security, as expected from the security definition of Sec.~\ref{subsec:SecDef}, we need to find a simulator $\sigma_{S}$ such that we can prove $\pi_{[n]}\mathcal{R}\approx\MEV_C\sigma_S$. It should emulate dishonest behaviour and the new interfaces a distinguisher has access to when interacting with the ideal resource. \\

Let $\sigma_{S}$ be the simulator shown in Fig.~\ref{fig:SigmaS}.
It first takes as input a start signal from the $\MEV_C$ resource, then emulates the verification protocol by forwarding this start signal. After receiving a classical description of a state $\rho$, it forwards it to $\MEV_C$ and gets and forwards the bit $C$. If $C=1$, it creates a random $v \in [n]$ and a random bit string $\Xfamily$ such that $\sum_{i=1}^{n}x_{i}\equiv 0$ (mod 2) and sends them to the outside world, except for $x_v$. Then it computes a table of possible measurement outcomes by calculating all necessary scalar products:
\begin{multline}\\
\label{eq:outputtable}
    Pr[y_1=0,y_2=0,...,y_n=0]=\bra{00...0} U \rho U^{\dagger}\ket{00...0}\\
    Pr[y_1=0,y_2=0,...,y_n=1]=\bra{00...1} U \rho U^{\dagger}\ket{00...1} \\
    ...\\
    Pr[y_1=1,y_2=1,...,y_n=1]=\bra{11...1} U \rho U^{\dagger}\ket{11...1}\\
\end{multline}
with $U=H^{x_1}(\sqrt{X})^{1-x_1}\otimes H^{x_2}(\sqrt{X})^{1-x_2}\otimes ... \otimes H^{x_n}(\sqrt{X})^{1-x_n}$ corresponding to the local operations made by each party on their qubit in the verification protocol. Then it randomly samples $Y=\{y_i\}_{i=1}^n$ from this table and sends them to the outside world, except for $y_v$.\\
\begin{figure}[!ht]
    \centering
    \includegraphics[width=8cm]{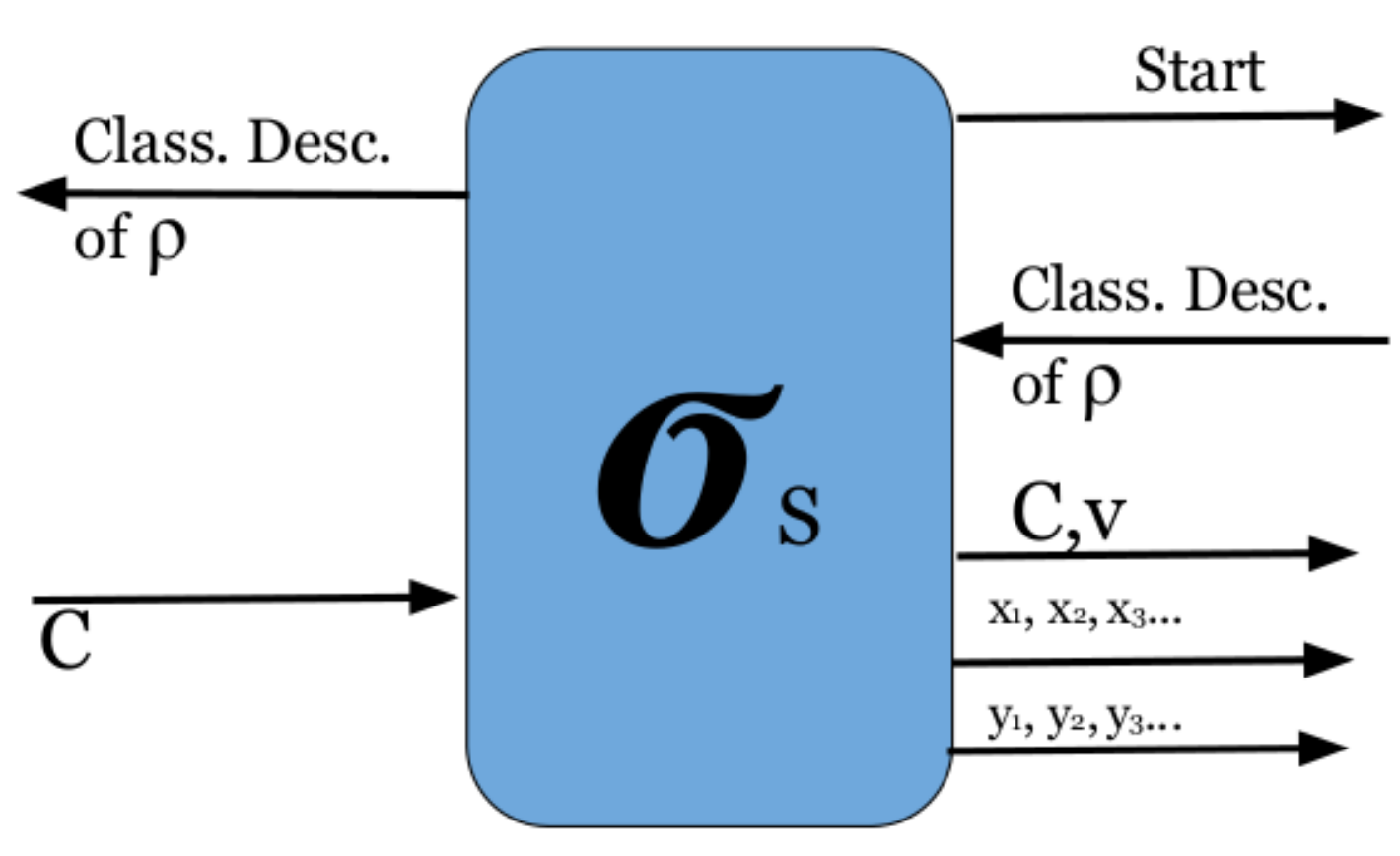}
    \caption{Simulator $\sigma_S$ for a dishonest source.}
    \label{fig:SigmaS}
\end{figure}\\
Roughly speaking, $\sigma_S$ classically emulates the whole multiparty protocol by reproducing the classical communication and local quantum operations. Plugged in $\MEV_C$, this defines a new resource $\MEV_C\sigma_S$ (see Fig.~\ref{fig:IdealProtDis}). With this simulator we can state that:

\begin{theorem}
The multipartite entanglement verification protocol with a noisy or malicious source emulates the ideal resource $\mathcal{MEV}_C\sigma_{S}$.
\end{theorem}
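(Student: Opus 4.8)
The plan is to establish the equivalence $\pi_{[n]}\R \approx \MEV_C\sigma_S$ by showing that \emph{every} distinguisher $\mathbf{D}$ has advantage exactly zero, i.e.\ that the two resources induce identical distributions on the whole transcript $\mathbf{D}$ can observe. I would begin by cataloguing $\mathbf{D}$'s view, which is structured identically in both worlds: $\mathbf{D}$ injects a classical description of a state $\rho$, receives the forwarded start signal and the public bit $C$, and then, conditioned on $C$, obtains either $n$ qubits (if $C=0$) or the leaked classical transcript --- the verifier identity $v$, the instructions $\{x_i\}_{i\neq v}$, the outcomes $\{y_i\}_{i\neq v}$, and the broadcast bit $b_{out}$ (if $C=1$). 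The proof then reduces to matching these pieces component by component, case-splitting on $C$.

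The routine parts come first. The law of $C$ matches because $\mathcal{O}_C$ is tuned to the probability $p$ used by $\MEV_C$. For $C=0$, both resources output exactly the injected $\rho$ on the party interfaces with no further leakage (the state generator leaks nothing and the quantum channels are perfect), so the quantum outputs are identically distributed. The classical randomness agrees as well: $v$ is uniform and $\Xfamily$ is uniform subject to even parity in both worlds, and since the withheld coordinate $x_v$ is fixed by the parity constraint it carries no extra information. Finally, by construction the table (\ref{eq:outputtable}) sampled by $\sigma_S$ is precisely the Born distribution $\bra{Y}U\rho U^\dagger\ket{Y}$ realised by the honest parties' actual measurements in $\pi_{[n]}\R$, so the marginal law of the leaked outcomes $\{y_i\}_{i\neq v}$ coincides.

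The delicate part, and the main obstacle, is the joint behaviour of $b_{out}$. In $\pi_{[n]}\R$ the bit $b_{out}$ is the deterministic parity check applied to the real outcomes $(\Xfamily,Y)$ and is therefore correlated with the leaked $\{y_i\}_{i\neq v}$, whereas in $\MEV_C\sigma_S$ the resource $\MEV_C$ samples $b_{out}$ directly from $\rho$, independently of the transcript produced by $\sigma_S$. Proving equivalence therefore requires two things: (i) that the \emph{marginal} law of $b_{out}$ emitted by $\MEV_C$ --- rejection with probability $\tau^2/2$ for $\tau=\mbox{TD}(\ketbra{GHZ}{GHZ},\rho)$ --- agrees with the probability that the real parity check fails, which is the soundness analysis relating the test statistics to the trace distance inherited from \cite{MEVresistant}; and (ii) that the \emph{joint} law of $(\{y_i\}_{i\neq v},b_{out})$ nonetheless matches, even though $b_{out}$ is freshly sampled on the ideal side.

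I expect requirement (ii) to be the hardest step. The structural fact I would exploit is that the verifier's own outcome $y_v$ is never transmitted: conditioned on $\Xfamily$ and the leaked $\{y_i\}_{i\neq v}$, the concrete $b_{out}$ is a function of the hidden bit $y_v$ alone, so one must show that this conditional law equals the constant $\tau^2/2$, thereby decoupling $b_{out}$ from the visible measurement record and matching the ideal resource. Verifying this conditional independence for \emph{every} $\rho$ --- not only for states close to $\ketbra{GHZ}{GHZ}$, where the parity is essentially deterministic --- is exactly where the real and ideal worlds differ, and it is also the place where one must confirm that withholding $y_v$ genuinely suffices (and, in passing, that the simplified $\tau^2/2$ law assigned to $\MEV_C$ really reproduces the protocol's acceptance statistics). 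Once (i) and (ii) are in hand, every component of the transcript is identically distributed throughout the adaptive interaction, so $d^{\mathbf{D}}(\pi_{[n]}\R,\MEV_C\sigma_S)=0$ for all $\mathbf{D}$ and the claimed equivalence follows.
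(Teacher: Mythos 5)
Your high-level route is the same as the paper's: catalogue the distinguisher's transcript, split on $C$, and match the pieces ($C$, $v$, $X\setminus\{x_v\}$, $Y\setminus\{y_v\}$, $b_{out}$) one by one. Your ``routine parts'' are exactly the checks the paper performs. The difference is that the paper stops there: its proof matches only marginals (it verifies $\Pr[y_i=0]=\Pr[\hat{y}_i=0]$ bit by bit and then asserts that $\hat{b}_{out}$ has the same law as $b_{out}$ ``by definition of our $\MEV_C$ resource''), and it never raises your requirements (i) and (ii). So your decomposition is sharper than the paper's own argument --- but precisely because it is sharper, it runs into a wall: both (i) and (ii) are \emph{false} for the resources as defined, so the plan cannot be completed, and in fact the claimed exact equivalence $d^{\mathbf{D}}(\pi_{[n]}\R,\MEV_C\sigma_S)=0$ fails.

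Concretely, for (i): the concrete acceptance probability is \emph{linear} in $\rho$ (averaging over $X$, it equals $\frac{1}{2}+\frac{1}{2}\mathrm{Tr}[\rho M]$ for a fixed observable $M$ built from the tested stabilizer elements), whereas $\tau^2/2$ with $\tau=\mathrm{TD}(\ketbra{GHZ}{GHZ},\rho)$ is not linear; the relation inherited from \cite{MEVresistant} is a bound, not an equality. Take $\rho=\ketbra{GHZ^-}{GHZ^-}$ with $\ket{GHZ^-}=(\ket{0\cdots0}-\ket{1\cdots1})/\sqrt{2}$: for every admissible $X$ this state is a $-1$ eigenstate of the tested observable (e.g.\ for $X=0^n$ the rotated state is supported on odd-parity strings), so the protocol rejects with certainty, while $\MEV_C$ rejects with probability $\tau^2/2=1/2$; a distinguisher inputting this $\rho$ already has advantage of order $(1-p)/2$. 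For (ii): take $n=3$ and $\rho=\frac{\ket{00}+\ket{11}}{\sqrt{2}}\otimes\ket{0}$, and condition on $v=1$, $X=(0,0,0)$. After the Hadamards, $y_1=y_2$ and $y_3$ is a fair coin, so in $\pi_{[n]}\R$ the broadcast bit satisfies $b_{out}=y_3$ identically, while in $\MEV_C\sigma_S$ the bit $\hat{b}_{out}$ is drawn by $\MEV_C$ independently of the $\hat{y}_3$ leaked by $\sigma_S$, so they agree only with probability $1/2$: another constant advantage, exactly the correlation failure you anticipated (withholding $y_v$ does not decouple $b_{out}$ from the visible record). The upshot is that your step (ii) is not merely ``the hardest step'' to verify --- it is where the real and ideal worlds provably differ. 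A correct statement requires modifying the ideal-world objects: $\MEV_C$'s verdict must be the parity test evaluated on the very string $\hat{Y}$ that is leaked (equivalently, $\sigma_S$ must sample $\hat{Y}$ consistently with the verdict it learns), and the acceptance law must be the protocol's true acceptance probability $\frac{1}{2}+\frac{1}{2}\mathrm{Tr}[\rho M]$ rather than the $\tau^2/2$ idealization.
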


\begin{proof}
In this scenario, we have to prove an equivalence between $\MEV_C\sigma_S$ and $\pi_{[n]}\mathcal{R}$ by showing that no distinguisher sending inputs and receiving outputs from both can guess which resource it is interacting with. In the concrete setting this means that the parties will share a state $\rho$ that is $\tau$-close to the GHZ state, with $\tau= \mbox{TD}(\ketbra{GHZ}{GHZ},\rho)$, that they will either keep or verify with probability S. In~\cite{MEVresistant}, it is shown that a state $\rho$ passes the verification test with probability $1-\frac{\tau^{2}}{2}$. \newpage
\begin{figure}[!ht]
    \centering
    \includegraphics[width=15cm]{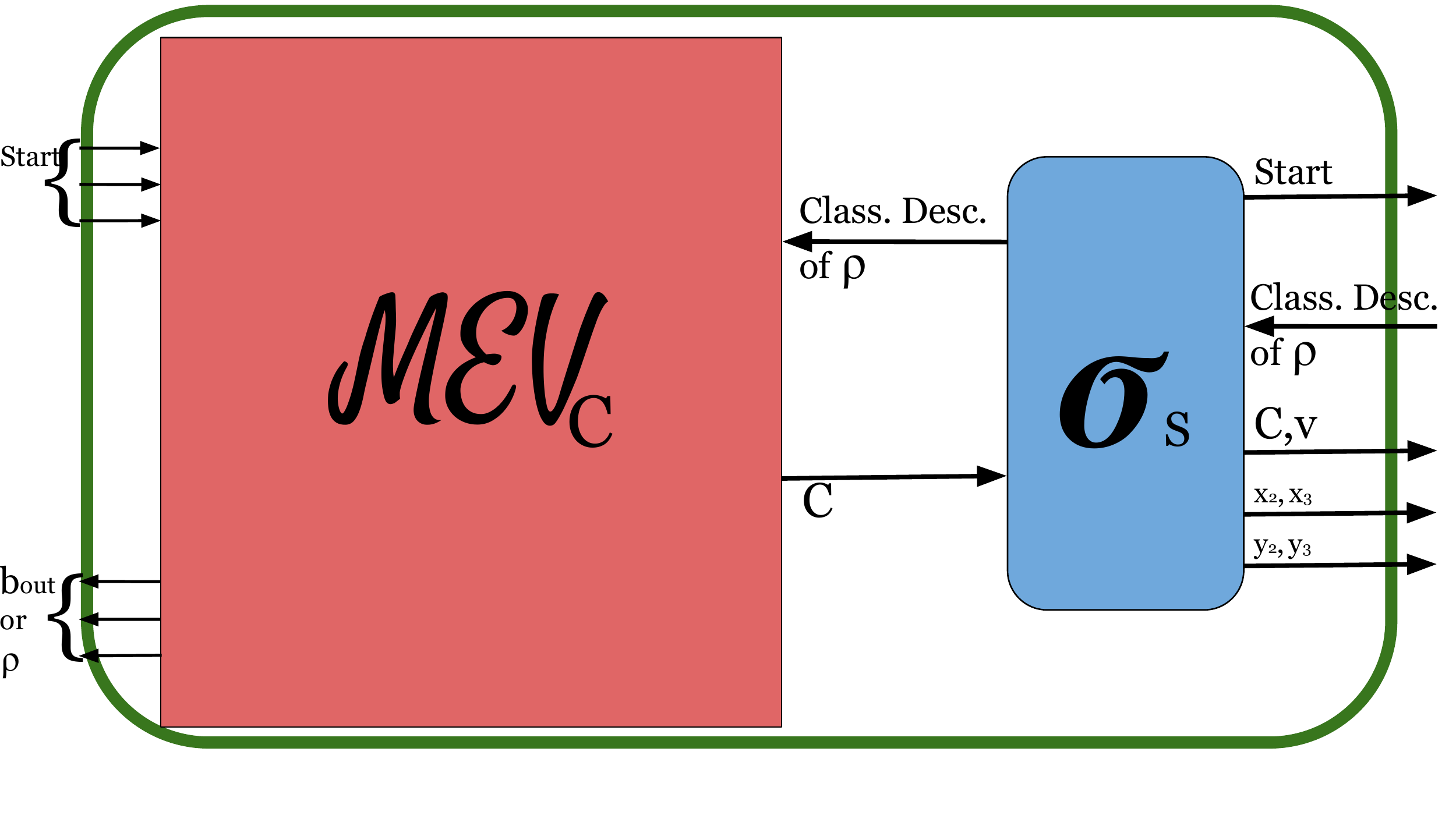}
    \caption{The $\MEV_C\sigma_S$ resource for $n=3$ parties accessed by a distinguisher (in green).}
    \label{fig:IdealProtDis}
\end{figure}
Let $\mathbf{D}$ be an all powerful distinguisher trying to guess between $\pi_{[n]}\R$ and $\MEV_C\sigma_{S}$. In the concrete setting, it sends in start signals at the parties interfaces then receives it at the source interface and sends a classical description of a state $\rho$ to $\mathcal{SG}_n$. $\mathbf{D}$ then sees all the classical communication happening out of the authenticated classical channels. More explicitly it will first see a bit $C$ and if $C=0$, nothing but the qubits of $\rho$ at each parties interface. If $C=1$, a random identifier $v\in[n]$ leaks, then random bits $X\backslash\{x_v\}$ from the Verifier to each party, then the outcome of each party's measurement except the Verifier's $Y\backslash\{y_v\}$ and finally the bit $b_{out}$ broadcasted by the Verifier.

In the ideal scenario, after $\mathbf{D}$ sends in a start signal, $\MEV_C$ forwards it to $\sigma_{S}$ which then sends a start signal simulating the query of a state by the parties. After that, the distinguisher sends a classical description of a state $\rho$ to $\sigma_S$ who forwards it to $\MEV_C$, which outputs $C$ at all its  interfaces. $\sigma_S$ gets $C$ and outputs it at its outside interface. If $C=0$, $\MEV_C$  outputs the qubits of $\rho$ at each party's interface. If $C=1$, $\sigma_S$ creates and outputs a random $\hat{v}\in[n]$ then computes a random bit string $\hat{X}=\{\hat{x}_i\}_{i=1}^n$ such that $\sum_{i=1}^{n}\hat{x}_{i}\equiv 0$ (mod 2) and sends them to the outside world, except for $\hat{x}_v$. After that, $\sigma_S$ computes the table of Eq.~(\ref{eq:outputtable}), randomly samples $\hat{Y}=\{\hat{y}_i\}_{i=1}^n$ and outputs them all to the outside world except for $\hat{y}_v$. Finally $\MEV_C$ outputs $\hat{b}_{out}=0$ with probability $1-\frac{\tau^{2}}{2}$ and $\hat{b}_{out}=1$ otherwise.

The probability distribution of the bit $C$ is designed to match the probability distribution given by the oracle $\mathcal{O}_C$. In the concrete setting $v$ is chosen randomly among the players through a query to the oracle $\mathcal{O}_v$ so we have that for all $i\in [n]$, $\Pr[v=i]=\Pr[\hat{v}=i]$. $\Xfamily$ and $\hat{X}=\{\hat{x}_i\}_{i=1}^n$ are both chosen randomly so their probability distribution is the same. $Y=\{y_i\}_{i=1, i \neq v}^n$ are the outcomes of the measurements of each qubit $\rho$ by each party in the $\{\ket{0},\ket{1}\}$ basis after doing the operation indicated by each $x_i$. The state after each party applied their operation is $U \rho U^{\dagger}$ with $U=H^{x_1}(\sqrt{X})^{1-x_1}\otimes H^{x_2}(\sqrt{X})^{1-x_2}\otimes ... \otimes H^{x_n}(\sqrt{X})^{1-x_n}$. They are samples from the table of Eq.~(\ref{eq:outputtable}). Hence for each $i\in[n]$ we have that $\Pr[y_i=0]=\Pr[\hat{y}_i=0]$. Finally, by definition of our $\MEV_C$ resource, the probability distribution of $\hat{b}_{out}$ is the same as the one of $b_{out}$.

The probability distribution of the output given by the two resources depending on the inputs is thus the same. Hence we have that for any distinguisher $\mathbf{D}$, $d^{\mathbf{D}}(\pi_{[n]}\mathcal{R}, \mathcal{MEV}\sigma_{S})=0$ and 
\begin{equation}
    \pi_{[n]}\mathcal{R}\approx\mathcal{MEV}_C\sigma_{S}.
\end{equation}

\end{proof}

\subsubsection{Conclusion.}
We have proved that $\ConcreteProt\approx\MEV_C\bot$ and that $\exists \sigma_S$ s.t. $\pi_{[n]}\mathcal{R}\approx\MEV_C\sigma_S$. This means that the multipartite entanglement verification protocol presented is composable when all parties are honest but with a possibly dishonest source. The protocol can thus be thought of as a black box and equivalently replaced by the $\MEV_C$ resource (Fig.~\ref{fig:IdealHonestMEV}) when designing protocols using this one as a subroutine. It assumes that the parties have access to resources $\mathcal{R}$, including common oracles and quantum memories. 

\subsubsection{Application : Verified GHZ sharing resource.} The composability result we proved allows $n$ parties to securely compose the protocol with itself multiple times. If the probability that $C=0$ is sufficiently small, the protocol will be repeated on expectation enough rounds to allow the parties to build high confidence on the source's ability to create a state close to the GHZ state. Since the round where they will actually use the qubits sent by the source to perform some communication or computation protocol is unknown to the source, it is not possible for the source to adapt and decide when to send faulty states. Hence it forces the source to send states that are sufficiently close to the GHZ state every time it is queried. We call this protocol the multi-round multipartite entanglement verification protocol.\\

By defining converters $\{\Pi_i\}_{i=1}^n$ representing the aforementioned protocol, we can construct a resource $\Pi_{[n]}\MEVfiltered$ that gives either a state at least $\epsilon$-close to the GHZ state to $n$ parties or an abort signal (see Fig.~\ref{fig:MultiRound} for a 3-party example and the explicit description of a $\Pi_i$). As it is a composable framework, AC allows us to state that
\begin{align}
    \Pi_{[n]}\ConcreteProt\approx\Pi_{[n]}\MEV_C\bot \\
    \textnormal{and }
    \exists \sigma_S \textnormal{ s.t. } \Pi_{[n]} \pi_{[n]} \mathcal{R} \approx \Pi_{[n]}\MEV_C \sigma_S.
\end{align} \newpage
\begin{figure}[!ht]
    \centering
    \includegraphics[width=18cm]{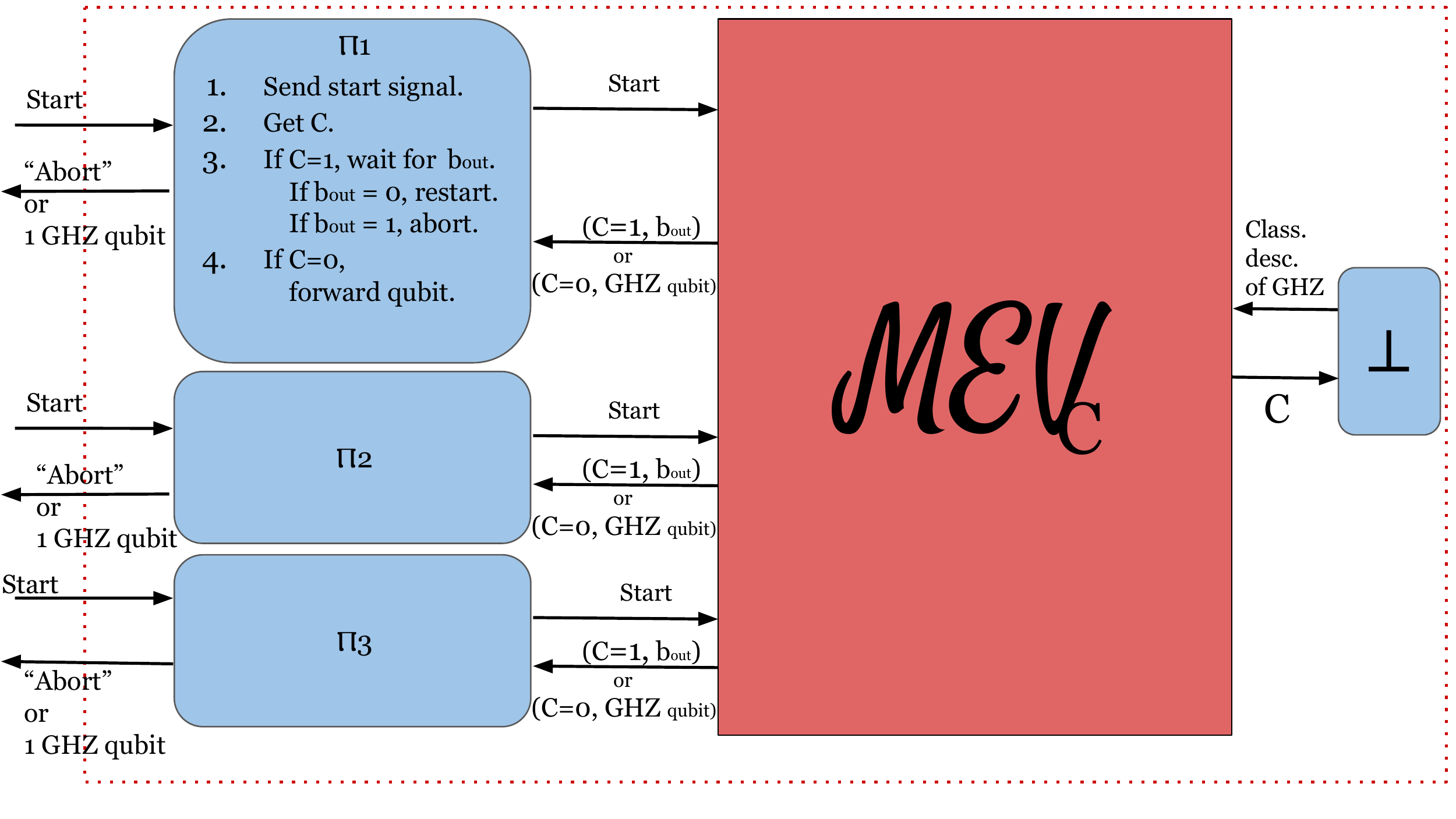}
    \caption{Multi-round verification resource $\Pi_{[n]}\MEVfiltered$ for 3 parties (in the red dotted square). It takes start signals as input and outputs either a shared quantum state $\epsilon$-close to the GHZ state or an abort signal. }
    \label{fig:MultiRound}
\end{figure}
Let us define a {\em verified GHZ state sharing} resource that we call $\mathcal{GHZ}$ (see Fig.~\ref{fig:GHZ}). This resource is the idealisation of multipartite entanglement verification achieved through an interactive protocol between the source and the parties. We assume that at each round of the interaction a state is produced and shared by the source and the parties perform some verification protocol until, in the end, they decide to trust that the shared state is close to the GHZ state or abort the protocol. 
$\mathcal{GHZ}$ takes as input start signals from the parties, then interacts with the source and finally outputs either a state $\epsilon$-close to the GHZ state or an abort signal. The interaction is abstractly modeled in the following way: first a Start signal is sent to the source interface, which replies with the classical description of a state $\rho$. Then $\mathcal{GHZ}$ will either ask for another state by sending a ``Continue'' signal to the source interface, or output an ``Abort'' signal to all interfaces because the current state was found to be far from the GHZ state, or, last, stop the protocol and share the last state it has received to the parties interface and send a ``Stop'' signal to the source interface.

This resource abstracts all the local operations and communication between the parties. From their point of view it is simply a source of states that are close to the GHZ state. However, to capture possibly malicious behavior from the source, we include the interaction on the source interface. We argue this is an abstract enough resource that captures all interactive verification procedures where the parties verify a number of states from the source before asserting that the source gives close to GHZ states.

\begin{figure}[!ht]
    \centering
    \includegraphics[width=14cm]{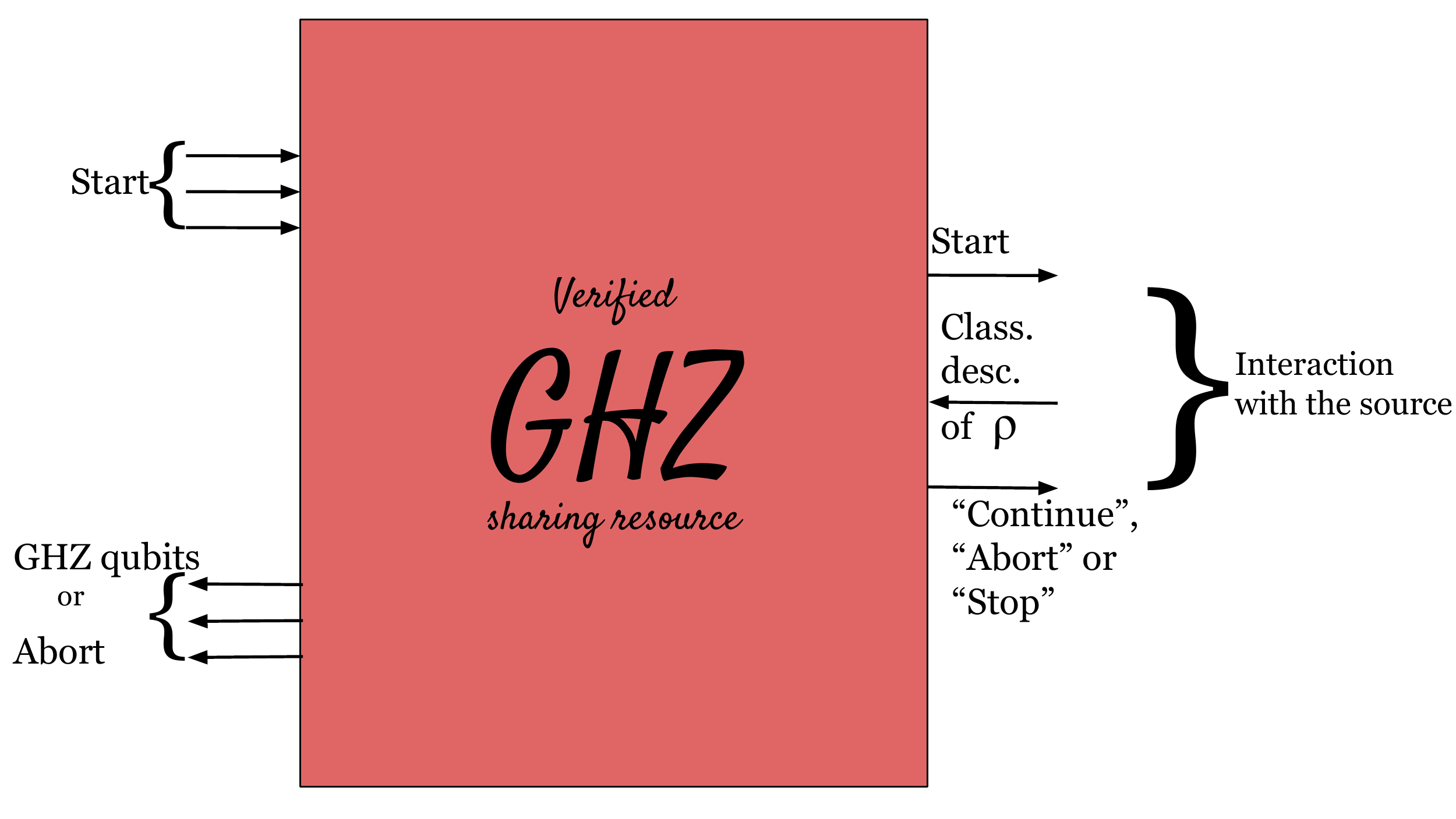}
    \caption{Verified GHZ sharing resource for 3 parties. It takes start signals as input from the parties on the left interface then interacts with the source on the right interface. It outputs either a shared quantum state $\epsilon$-close to the GHZ state or an abort signal to the parties.} 
    \label{fig:GHZ}
\end{figure}

Similarly to Sec.~\ref{subsec:SecAnal}, when we define $\bot'$ and $\sigma_C$ as in Fig.~\ref{fig:GHZproof}, we can prove that
\begin{align}
    \Pi_{[n]}\MEV_C\bot \approx \mathcal{GHZ}\bot'\\
    \textnormal{and } \Pi_{[n]}\MEV_C \approx \mathcal{GHZ}\sigma_C
\end{align}

Hence,
\begin{align}
    \Pi_{[n]}\ConcreteProt\approx \mathcal{GHZ}\bot\\
    \textnormal{and }
    \exists \sigma_S \textnormal{ s.t. } \Pi_{[n]} \pi_{[n]} \mathcal{R} \approx\mathcal{GHZ}\sigma_S
\end{align}
\vspace{1cm}

This means that the multi-round multipartite entanglement verification protocol constructs the $\mathcal{GHZ}$ resource out of $\mathcal{R}$. We can also state that it is composably secure in the setting of all honest parties and in the presence of a possibly malicious or noisy source. We can conclude that this protocol allows $n$ parties to get a GHZ state as a subroutine of a bigger protocol with an untrusted source.

\begin{figure}[!ht]
\centering
\begin{minipage}{0.45\textwidth}
    \includegraphics[width=0.8\textwidth]{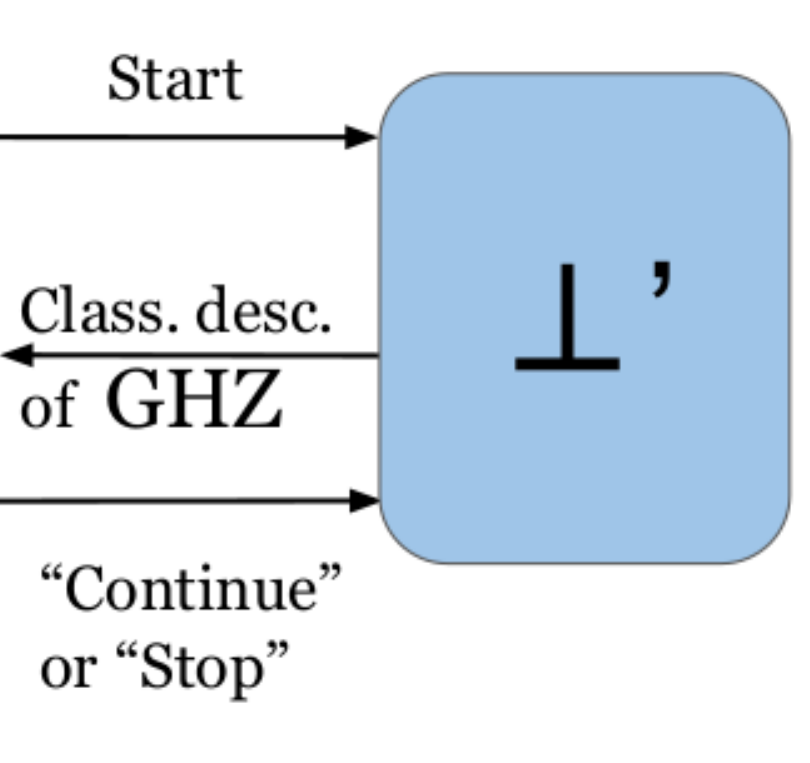}
\end{minipage}
\begin{minipage}{.45\textwidth}
    \includegraphics[width=\textwidth]{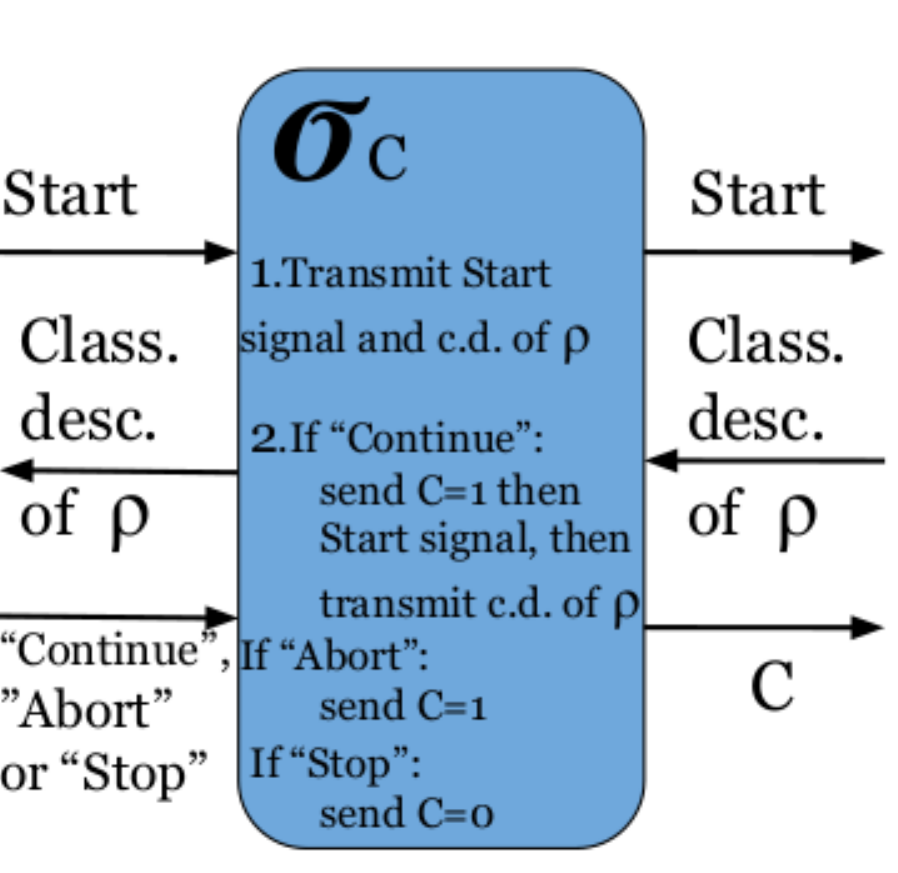}
\end{minipage}
    \caption{Filter $\bot'$ (on the left) and simulator $\sigma_C$ (on the right) to plug into the $\mathcal{GHZ}$ resource. The former represents the honest use of the resource and allows us to state the correctness of the AC security proof (Eq. (12)). The latter is the simulator that models the interface to which a distinguisher has access when we consider the source to act maliciously using the multiround multipartite entanglement verification resource $\Pi_{[n]}\MEV_C$. It allows us to derive the second part (noisy or malicious source) of the AC security proof (Eq. (13)).}
    \label{fig:GHZproof}
\end{figure}
\newpage

\section{Discussion}
\vspace{-3mm}
\subsection{Case of honest parties}
\vspace{-1.5mm}
The multipartite entanglement verification protocol is particularly suited in a distributed computing scenario where the parties are honest but where there could be a faulty resource. They can use this protocol to check if the noise of an entanglement source is small enough for practical use. Indeed, if after many rounds of performing this protocol the output is most of the time $b_{out}=0$, they can realistically be sure that the source is producing states that are close to the GHZ state. Its composability allows for the construction of the multi-round verification resource, which can find practical use in larger communication protocols, as for example in anonymous ranking~\cite{Ranking}, quantum secret sharing~\cite{SecretSharing} or distributed consensus~\cite{Consensus} protocols. In fact, any protocol that starts with a GHZ state shared among $n$ honest parties that don't trust their source can be composed with this one in a secure way. This might seem limiting but is in fact realistic in many distributed computing settings. This protocol can also be seen as a building block of a quantum network. We can reasonably assume that parties are honest when performing protocols establishing the network in the same way we think about parties when considering entanglement distillation, network or transport layer protocols of the OSI model of the classical Internet. An intermediate scale quantum Internet example is a network where a source shares a GHZ state to all parties at each time-step, that they either use or verify. Our verification protocol can in this case be hidden in the assumptions of the network.

 One may wonder why we did not start by defining the multi-round and the verified GHZ sharing resources of the above section from the beginning. This is indeed the practical resource that one would like to use in larger protocols as it directly provides quantum states that are $\epsilon$-close to the GHZ state. This was based on the fact that our priority was not to define \emph{ad hoc} the most useful resource, but to succeed in modeling a resource that is as close as possible to the signals that will actually be sent by the parties when performing the protocol in real life, and use this resource in a composably secure way to obtain a practical multipartite entanglement verification resource, that of the multi-round resource.
 Our one-round resource captures the important parameters for composing the protocol in larger routines and it allows for modularity and a more precise understanding of what happens in the multi-round case. We will also see below that  dishonest behavior of a party already causes composability issues in the one-round case thus we get a better understanding of the issues by proving composability in this case. Moreover the box-shaped resource that we construct using AC (Fig.~\ref{fig:IdealHonestFilteredMEV}) is close to the black-box picture that we would like to have when thinking of the building blocks of the Quantum Internet.  Finally, we emphasize that this protocol only assumes classical communications between the parties and single-qubit local operations for each party, making it a good candidate for scalable application development. 

\subsection{Case of a malicious party}
\label{subsec:MaliciousParty}

When studying this problem, it is natural to think about the case of dishonest parties possibly controlling the source. If we assume that dishonest parties are trying to make the others accept a state that is not close the GHZ state, results from~\cite{MEVresistant} and~\cite{MEVexperimental} show that for one round of verification, the output bit $b_{out}$ depends on the minimal distance between the GHZ state and the shared state up to local operations on the part of the state held by the dishonest parties. This result holds even when the dishonest parties have complete control over the state generation resource. For this to hold, we have to assume that the Verifier is always honest and that the parties cannot influence the probability distributions of the oracles $\mathcal{O}_C$ and $\mathcal{O}_v$.

Yet as discussed in the first part of this paper, it seems that this protocol cannot be proven composable in the Abstract Cryptography framework when considering a dishonest party. Indeed one straightforward strategy for a dishonest party would be to make the protocol abort randomly, which would give false information about the source. Any dishonest party actually has complete control on the distribution of the concrete resource's output $b_{out}$ while the ideal resource's output is fixed by the distance with the GHZ state of the state given as input. Even if we add switches to our resource on which a simulator could act to make it abort (as custom is such cases), we could not reproduce the abort probability distribution of our concrete protocol in the ideal world. It seems impossible to find a simulator that emulates the interfaces a distinguisher has access to when removing one of the $\pi_i$. This can be seen in the AC framework by removing the converters corresponding to the dishonest parties and finding distinguishing attacks for every possible simulator. We would moreover need extra assumptions on the quantum registers and the access to the multiparty computation oracles $\mathcal{O}_C$ and $\mathcal{O}_v$ that seem unpractical in a near-term network.

However, our composability result comes on top of the security proof of~\cite{MEVresistant} meaning that our multiparty entanglement verification protocol is secure against possible coalition of dishonest parties and source trying to persuade others that they share a GHZ when they don't and composably secure against a malicious source. It does not limit the use of our protocol to the all honest case. No attack is known to make use of the repetition of the protocol that would alter the \textit{integrity} of the shared state more than simply repeating the attack described in ~\cite{MEVresistant}. On the other hand, the \textit{availability} of the resource can be compromised by dishonest behaviour in an unpredictable way. This sheds light on the pros and cons of using a game-based framework versus a composable framework. In the former we can restrict dishonest behaviour to specific attacks and get specific security properties while in the former we can only act on how powerful the class of distinguishers is but get more general security claims. By studying the protocol in different frameworks, we are able to take the best of all approaches and show different aspects of security that increase confidence in the protocols.

\subsection{Practical implementation in a near-term quantum network}
\label{subsec:PracticalImp}

To actually implement the protocol, one has to replace the resources in $\R$ with actual protocols or physical resources. Multiparty classical protocols should take the place of oracle calls, and have to be proved composable to securely construct $\R$ out of them and the quantum resources. An example of a protocol replacing calls to $\mathcal{O}_C$ is the random bit protocol explicited in~\cite{Anonymity} and~\cite{classAlgo}. Previous work~\cite{MEVresistant} shows that by choosing the probability of using the qubit for computation ($C=0$) to be $\frac{\epsilon^2}{4n\delta}$ for some $\delta>0 $, all honest parties have the guarantee that the probability the state used has distance at least $\epsilon$ from the correct
one is at most $\frac{1}{\delta}$. 

Qubit transportation should be taken care of by physical channels and link layer protocols that one has to study to see if they are equivalent to the quantum channel resource presented in this paper. As previously mentioned, any noisy channel can be modeled by a perfect channel in which a noisy state is given as input, and a noisy source can be modeled by a perfect source in which a classical description of a noisy state is given. The $\mathcal{SG}_n$ resource is designed as an attempt to capture what happens in the most general case when the protocol is performed in the lab where, at some point, a classical signal is sent to a quantum device that creates a state. Usually some information is accessible to the person controlling the device to check (for example by heralding photons) if the right state has been created. We suppose here that none of this information leaks from $\mathcal{SG}_n$ as it is the more restricting scenario. Moreover we don't restrict the source to create only $n$-qubit states, but merely enforce that it is able to create states up to this size. The proof holds even if the source creates bigger states and keeps part of it or sends it to a malicious party. $\mathcal{SG}_n$ is thus not meant to be realistic but to give an abstract embodiment of any source.  A photonic implementation of a loss-tolerant variation of the original protocol has been achieved with 4 parties~\cite{MEVexperimental}. This leads to expect near-term realization of our protocol, presenting all security properties as well as composability and modularity for use in bigger protocols.

Lastly, the quantum memory assumption can be removed by asking the parties to measure their bit directly after receiving it and flipping the outcome randomly depending on the input given by the Verifier. We would lose the security properties against a malicious party from~\cite{MEVresistant} that are based on the actual order of the inputs for each party. In our all honest setting, this would not matter so this protocol can actually be used in near-term architecture to securely check a source. Experimental realization of this protocol in a composable way is currently studied, which would allow to take this protocol as a concrete building block for applications in the quantum Internet. Whether this protocol should remain in the application layer or be hidden in some network or transport layer is still to be determined and will depend on future developments in quantum network architectures.

\section*{Acknowledgements}
We would like to thank Simon Neves, Léo Colisson, Atul Mantri, Anna Pappa, Damian Markham and Frédéric Grosshans for fruitful discussions. This project is part of the Quantum Internet Alliance and has received funding from the European Union’s Horizon 2020 research and innovation program under grant agreement No 820445. We also acknowledge support from the European Union through the Project ERC-2017-STG-758911 QUSCO, from QuantERA through the project QuantAlgo and from the ANR through the Project ANR-17-CE39-0005 quBIC. 

\section*{References}
\bibliographystyle{ieeetr}
\bibliography{article}

\begin{thebibliography}{10}

\bibitem{QIavision}
S.~Wehner, D.~Elkouss, and R.~Hanson, ``Quantum internet: A vision for the road
  ahead,'' {\em Science}, vol.~362, no.~6412, 2018.

\bibitem{IETFlinklayer}
A.~Dahlberg, M.~Skrzypczyk, and S.~Wehner, ``{The Link Layer service in a
  Quantum Internet},'' Internet-Draft draft-dahlberg-ll-quantum-03, Internet
  Engineering Task Force, Oct. 2019.
\newblock Work in Progress.

\bibitem{loopholefreebell}
B.~Hensen, H.~Bernien, A.~Dréau, A.~Reiserer, N.~Kalb, M.~Blok, J.~Ruitenberg,
  R.~Vermeulen, R.~Schouten, C.~Abellan, W.~Amaya, V.~Pruneri, M.~Mitchell,
  M.~Markham, D.~Twitchen, D.~Elkouss, S.~Wehner, T.~Taminiau, and R.~Hanson,
  ``Loophole-free bell inequality violation using electron spins separated by
  1.3 kilometres,'' {\em Nature}, vol.~526, 10 2015.

\bibitem{repeaterNV}
F.~Rozpedek, R.~Yehia, K.~Goodenough, M.~Ruf, P.~C. Humphreys, R.~Hanson,
  S.~Wehner, and D.~Elkouss, ``Near-term quantum-repeater experiments with
  nitrogen-vacancy centers: Overcoming the limitations of direct
  transmission,'' {\em Phys. Rev. A}, vol.~99, p.~052330, May 2019.

\bibitem{Zoo}
``Quantum protocol zoo, https://wiki.veriqloud.fr/.''

\bibitem{QuantumMoney}
M.~Bozzio, A.~Orieux, L.~Trigo~Vidarte, I.~Zaquine, I.~Kerenidis, and
  E.~Diamanti, ``Experimental investigation of practical unforgeable quantum
  money,'' {\em npj Quantum Information}, vol.~4, Jan 2018.

\bibitem{weakcoinflippin}
M.~Bozzio, U.~Chabaud, I.~Kerenidis, and E.~Diamanti, ``Quantum weak coin
  flipping with a single photon,'' 2020.

\bibitem{Leaderelection}
M.~Ganz, ``Quantum leader election,'' {\em Quantum Information Processing},
  vol.~16, 10 2009.

\bibitem{Anonymity}
A.~Unnikrishnan, I.~MacFarlane, R.~Yi, E.~Diamanti, D.~Markham, and
  I.~Kerenidis, ``Anonymity for practical quantum networks,'' {\em Physical
  Review Letters}, vol.~122, 11 2018.

\bibitem{GHZ}
D.~M. Greenberger, M.~A. Horne, and A.~Zeilinger, {\em Going Beyond Bell's
  Theorem}, pp.~69--72.
\newblock Dordrecht: Springer Netherlands, 1989.

\bibitem{AbstractCryptography}
U.~Maurer and R.~Renner, ``Abstract cryptography,'' {\em In Innovations In
  Computer Science}, 2011.

\bibitem{ConsCryptography}
U.~Maurer, ``Constructive cryptography - a new paradigm for security
  definitions and proofs,'' {\em IN Theory of Security and Applications},
  pp.~33--56, 2011.

\bibitem{FromToConsCrypto}
U.~Maurer and R.~Renner, ``From indifferentiability to constructive
  cryptography (and back),'' in {\em Theory of Cryptography}, (Berlin,
  Heidelberg), pp.~3--24, Springer Berlin Heidelberg, 2016.

\bibitem{UC}
R.~Canetti, ``Universally composable security: A new paradigm for cryptographic
  protocols.'' Cryptology ePrint Archive, Report 2000/067, 2000.

\bibitem{UnfairCoinTossing}
G.~Demay and U.~Maurer, ``Unfair coin tossing,'' {\em 2013 IEEE International
  Symposium on Information Theory}, pp.~1556--1560, 2013.

\bibitem{RemoteState}
A.~Gheorghiu and T.~Vidick, ``Computationally-secure and composable remote
  state preparation,'' {\em ArXiv}, vol.~abs/1904.06320, 2019.

\bibitem{ObliTransfer}
U.~{Maurer} and J.~{Ribeiro}, ``New perspectives on weak oblivious transfer,''
  in {\em 2016 IEEE International Symposium on Information Theory (ISIT)},
  pp.~790--794, July 2016.

\bibitem{MultipartyQC}
E.~Kashefi and A.~Pappa, ``{Multiparty Delegated Quantum Computing},'' {\em
  {Cryptography}}, vol.~1, p.~12, July 2017.

\bibitem{DelegatedQC}
V.~Dunjko, J.~F. Fitzsimons, C.~Portmann, and R.~Renner, ``Composable security
  of delegated quantum computation,'' in {\em Advances in Cryptology --
  ASIACRYPT 2014} (P.~Sarkar and T.~Iwata, eds.), (Berlin, Heidelberg),
  pp.~406--425, Springer Berlin Heidelberg, 2014.

\bibitem{ComposableMultiDQC}
M.~Houshmand, M.~Houshmand, S.-H. Tan, and J.~Fitzsimons, ``Composable secure
  multi-client delegated quantum computation,'' {\em ArXiv},
  vol.~abs/1811.11929, 2018.

\bibitem{Relativistic}
V.~Vilasini, C.~Portmann, and L.~del Rio, ``Composable security in relativistic
  quantum cryptography,'' {\em New Journal of Physics}, vol.~21, p.~043057, apr
  2019.

\bibitem{Ratcheting}
D.~Jost, U.~Maurer, and M.~Mularczyk, ``A unified and composable take on
  ratcheting.'' Cryptology ePrint Archive, Report 2019/694, 2019.

\bibitem{MEVresistant}
A.~Pappa, A.~Chailloux, S.~Wehner, E.~Diamanti, and I.~Kerenidis,
  ``Multipartite entanglement verification resistant against dishonest
  parties,'' {\em Physical Review Letters}, vol.~108, 12 2011.

\bibitem{MEVexperimental}
W.~McCutcheon, A.~Pappa, B.~Bell, A.~McMillan, A.~Chailloux, T.~Lawson,
  M.~Mafu, D.~Markham, E.~Diamanti, I.~Kerenidis, J.~Rarity, and M.~Tame,
  ``Experimental verification of multipartite entanglement in quantum
  networks,'' {\em Nature Communications}, vol.~7, p.~13251, 11 2016.

\bibitem{Ranking}
W.~Huang, Q.-Y. Wen, B.~Liu, Q.~Su, S.-J. Qin, and F.~Gao, ``Quantum anonymous
  ranking,'' {\em Phys. Rev. A}, vol.~89, p.~032325, Mar 2014.

\bibitem{SecretSharing}
M.~Hillery, V.~Bužek, and A.~Berthiaume, ``Quantum secret sharing,'' {\em
  Physical Review A}, vol.~59, p.~1829–1834, Mar 1999.

\bibitem{Consensus}
E.~D'Hondt and P.~Panangaden, ``Leader election and distributed consensus with
  quantum resources,'' 2004.

\bibitem{classAlgo}
A.~Broadbent and A.~Tapp, ``Information-theoretic security without an honest
  majority,'' in {\em Advances in Cryptology -- ASIACRYPT 2007} (K.~Kurosawa,
  ed.), (Berlin, Heidelberg), pp.~410--426, Springer Berlin Heidelberg, 2007.

\end{thebibliography}

\end{document}